\setlist[itemize]{leftmargin=*,itemsep=0pt,topsep=0pt}
\setlist[enumerate]{leftmargin=*,itemsep=0pt,topsep=0pt}
\crefname{equation}{eq.}{eqs.}
\Crefname{equation}{Eq.}{Eqs.}
\tikzstyle{obs} = [circle,fill=white,draw=black,inner sep=1pt,minimum size=20pt,font=\fontsize{10}{10}\selectfont,node distance=1,thick]
\tikzstyle{latent} = [obs,dotted]
\newcommand{\bR}{\ensuremath \mathbb{R}}
\newcommand{\bN}{\ensuremath \mathbb{N}}
\newcommand{\cF}{\ensuremath \mathcal{F}}
\newcommand{\cH}{\ensuremath \mathcal{H}}
\newcommand{\cG}{\ensuremath \mathcal{G}}
\newcommand{\cX}{\ensuremath \mathcal{X}}
\newcommand{\cY}{\ensuremath \mathcal{Y}}
\newcommand{\cZ}{\ensuremath \mathcal{Z}}
\newcommand{\cN}{\ensuremath \mathcal{N}}
\newcommand{\cD}{\ensuremath \mathcal{D}}
\newcommand{\cP}{\ensuremath \mathcal{P}}
\newcommand{\testfun}{\ensuremath \cG}
\newcommand{\cFcompat}{\ensuremath \mathcal{F}_{\mathrm{c}}}
\newcommand{\iidsim}{\ensuremath \overset{{\mathrm{iid}}}{\sim}}
\newcommand{\nsample}{\ensuremath n}
\newcommand{\nexplore}{\ensuremath T_{\mathrm{1}}}
\newcommand{\nexploit}{\ensuremath T_{\mathrm{2}}}
\newcommand{\nlookback}{\ensuremath K}
\newcommand{\nmixture}{\ensuremath M}
\DeclareMathOperator*{\argmin}{arg\,min}
\DeclareMathOperator*{\arginf}{arg\,inf}
\DeclareMathOperator{\E}{\mathbb{E}}
\DeclareMathOperator{\indep}{\perp\!\!\!\perp}
\DeclareMathOperator{\dep}{\not\! \perp\!\!\!\perp}
\newcommand{\B}[1]{\bm{#1}}
\newcommand{\id}{\mathrm{Id}}
\newcommand{\given}{\,|\,}
\newcommand{\dimX}{{d_x}}
\newcommand{\dimZ}{{d_z}}
\newcommand{\lamfirst}{\lambda_g}
\newcommand{\lamsecond}{\lambda_f}
\newcommand{\lamfeas}{\lambda_c}
\theoremstyle{plain}
\newtheorem{theorem}{Theorem}
\newtheorem{lemma}[theorem]{Lemma}
\newcommand{\xhdr}[1]{\noindent \textbf{#1}\:}
\title{Targeted Sequential Indirect Experiment Design}
\author{%
  Elisabeth Ailer \\
  Technical University of Munich \\
  Helmholtz Munich \\
  Munich Center for Machine Learning (MCML) \\
  \And
  Niclas Dern \\
  Technical University of Munich \\
  \And
  Jason Hartford \\
  Valence Labs \\
  \And
  Niki Kilbertus \\
  Technical University of Munich \\
  Helmholtz Munich \\
  Munich Center for Machine Learning (MCML) \\
}
\begin{document}

\maketitle

\begin{abstract}
Scientific hypotheses typically concern specific aspects of complex, imperfectly understood or entirely unknown mechanisms, such as the effect of gene expression levels on phenotypes or how microbial communities influence environmental health. Such queries are inherently causal (rather than purely associational), but in many settings, experiments can not be conducted directly on the target variables of interest, but are indirect. Therefore, they perturb the target variable, but do not remove potential confounding factors. If, additionally, the resulting experimental measurements are multi-dimensional and the studied mechanisms nonlinear, the query of interest is generally not identified. We develop an adaptive strategy to design indirect experiments that optimally inform a targeted query about the ground truth mechanism in terms of sequentially narrowing the gap between an upper and lower bound on the query. While the general formulation consists of a bi-level optimization procedure, we derive an efficiently estimable analytical kernel-based estimator of the bounds for the causal effect, a query of key interest, and demonstrate the efficacy of our approach in confounded, multivariate, nonlinear synthetic settings.
\end{abstract}

\section{Introduction}\label{sec:intro}

Experimentation is the ultimate arbiter of scientific discovery.
While advances in machine learning (ML) and ever increasing amounts of observational data promise to accelerate scientific discovery in virtually all scientific disciplines, all hypotheses ultimately have to be supported or falsified by experiments.
But the space of possible experiments is combinatorial, and as a result, experimentation in the physical world may become a major bottleneck of the scientific process and critically inhibit, for example, fast turnaround in drug discovery loops even in fully automated labs.
Efficient adaptive data-driven strategies to propose the most useful experiments are thus of vital importance.

Targeted experimentation requires a well-formed hypothesis about the underlying system.
We focus on hypotheses that are typically expressed in terms of causal effects or properties of functional relationships.
For example, we may posit that there is some ground truth mechanism/function $f$ that describes how a given phenotype depends on gene expression levels in a cell, or how the composition of microbes affects certain environmental health indicators.
In the natural sciences, such mechanisms are typically (a) nonlinear, (b) dependent on multi-variate inputs, and (c) confounded by additional, unobserved quantities.
For example, the way in which the gut microbiome composition affects energy metabolism in humans is likely confounded by various environmental and lifestyle choices.
Due to these factors, $f$ is generally unidentifiable, i.e., we cannot hope to infer $f$ from (even infinite amounts of) observational data alone.

Scientific queries are typically more narrow: `Can I expect changes in body mass index when I increase the relative abundance of Lactobacillus bacteria in the gut? How does the susceptibility of breast cancer depend on BRCA1 expression levels?'
Such targeted scientific queries do not require full knowledge of the underlying mechanism $f$, but only concern a specific aspect of $f$, which we can express mathematically as a functional $Q$ of $f$.
Due to potential unobserved confounding, even targeted queries $Q[f]$ may not be identified from observational data alone \citep{bennett2022inference}---experimentation is required.
In these examples, the inputs to $f$ of interest -- which we call \emph{treatments} -- are multi-variate measurements such as gene expression levels or microbiome abundances.
Randomizing these treatment variables is typically thought of as the gold standard for removing confounding effects, but perfect randomization is typically impossible: we can seldom perform hard interventions that set a distribution over these variables' outcomes independently of confounding factors.
Instead, experimental access in the natural sciences often amounts to limited possible perturbations of the system that induce distribution shifts of the treatments.
For example, while we can administer antibiotics with very predictable effects on certain microbes, such an intervention does not remove confounding effects from lifestyle choices such as nutrition and exercise or environmental factors.
Therefore, we think of experimentation as a source of indirect interventions, which strongly perturb the distribution over the treatment variables of interest.

Our primary goal is to design experiments that maximally inform the query of interest, $Q[f]$, within a fixed budget of experimentation. 
For multi-variate treatments and nonlinear $f$ the query $Q[f]$ may not be identifiable at all, or only be identified after an infeasible number of experiments.
We address this by maintaining an upper and lower bound on the query, and sequentially selecting experiments that minimize the gap between these bounds.
We show that by treating experiments as instrumental variables (IV) we can estimate the bounds by building on existing techniques in (underspecified) nonlinear IV estimation. 
Our procedure involves a bi-level optimization, where an inner optimization estimates the bounds, and an outer optimization seeks to minimize the gap between the bounds.
We show that if we are prepared to assume that $f$ lies within a reproducing kernel Hilbert space (RKHS), then there are queries $Q[f]$ for which we can solve the inner estimation in closed form.
In summary, we make the following contributions:
\begin{itemize}
\item We formalize the problem of indirect adaptive experiments in nonlinear, multi-variate, confounded settings as a sequential underspecified instrumental variable estimation, where we seek to adaptively tighten estimated bounds on the target query.
\item We derive closed form expressions for the bound estimates when assuming $f$ to lie within a reproducing kernel Hilbert space (RKHS) for linear $Q$.
\item We develop adaptive strategies for the outer optimization to iteratively tighten these bounds and demonstrate empirically that our method robustly selects informative experiments leading to identification of $Q[f]$ (collapsing bounds) when it can be identified via allowed experimentation.
\end{itemize}

\section{Problem Setting and Related Work}\label{sec:related}

\subsection{Problem Setting}

\xhdr{Data generating process.} We assume the following data generating process
\begin{equation}\label{eq:setting}
    X = h(Z, U)\:, \qquad 
    Y = f_0(X) + U\:,
\end{equation} 
where we have experimental access to $Z \in \cZ \subset \bR^{\dimZ}$, $X \in \cX \subset \bR^{\dimX}$ are the actual inputs to the mechanism of interest, i.e., the `treatments' (in the cause-effect estimation sense) or `targets' (in the indirect experimentation sense), $Y \in \cY \subset \bR$ is the scalar outcome, e.g., the phenotype of interest, and $U$ is an unobserved confounding variable with $\E[U]=0$.
The function $h$ determines how experiments (choices of $Z$) affect treatments $X$ and can be arbitrary.
The function $f_0$ is the mechanism of interest for our scientific query and can also be arbitrary.
As in typical indirect experiment settings, we assume that $Z$ consists of valid instrumental variables: (1) $Z \dep X$, i.e., we do perturb the distribution of $X$ by experimenting on $Z$, (2) $Z \indep U$, i.e., our choice of experiments is not influenced by the confounding variables, and (3) $Z \indep Y \given X, U$, i.e., $Z$ only affects $Y$ via $X$ and does not have a direct effect on $Y$.
When $X$ is multi-variate and $f_0,h$ are allowed to be non-linear, $f_0$ is commonly not identified form observational data.
However, we argue that in such settings one should be aiming for a more targeted query, i.e., does not aim to identify $f_0$ fully, but only a certain aspect of the mechanism.

\xhdr{Scientific queries.} We represent the scientific query of interest by a functional $Q: \cF \to \bR$ where $\cF \subset L_2(X)$ is the space of considered mechanisms $f$.\footnote{$L_2(X)$ is the $L_2$-space of scalar-valued functions of $X$ with respect to the distribution of $X$.}
Generic functionals can measure both local as well as global properties of any $f\in \cF$: simple average treatment effects via $Q[f] := \E[Y | do(X=x^*)] = f(x^*)$;when $\cF$ is a Hilbert space, projections of $f$ onto a fixed basis function $f^*$ via $Q[f] := \langle f^*, f\rangle_{\cF} / \|f\|^2_{\cF}$; the local causal effect of an individual component $X_i$ on $Y$ at $x^*$ via $Q[f]:= (\partial_i f)(x^*)$.\footnote{We write $(\partial_i f)(x^*)$ for the partial derivative of $f$ with respect to the $i$-th argument evaluated at $x^*$.}
Here, $x^*$ could be the mean of the observed treatments, i.e., represent a `base gene expression level' and we are interested in how a local change away from that base level affects the outcome.
While our methodology applies to any functional, we focus our empirical experiments on causal effects of individual components as key scientific queries, such as how does up- or down-regulating a given gene affect the phenotype?

\xhdr{Learning experiments.} Our goal is to sequentially learn a policy $\pi \in \cP(\cZ)$\footnote{Here, $\cP(\bR^{\dimZ})$ denotes the space of probability distributions over $\bR^{\dimZ}$.} such that data sampled from the joint distribution $P_{\pi}(X, Y, Z) = \pi(Z) P(X \given Z) P(Y \given X)$ induced by the model in \cref{eq:setting} optimally informs $Q[f_0]$.
We highlight that depending on $f_0,h$ and the distribution of $U$, there may exist a policy $\pi$ such that $Q[f_0]$ is identified from $P_{\pi}(X, Y, Z)$, but in general it may remain unidentified for all policies even in the infinite data limit.
For a non-optimal policy, i.e., non-informative experimentation, we should expect $Q[f_0]$ to be partially identified from $P(X, Y, Z)$ at best.
Therefore, we aim to estimate upper and lower bounds $Q^+(\pi), Q^-(\pi)$ of $Q[f_0]$ and sequentially learn the policy $\pi$ that minimizes $\Delta(\pi) := Q^+(\pi) - Q^-(\pi)$.
In each round $t \in [T] := \{1, \ldots, T\}$, we observe $\nsample$ i.i.d.\ samples from $P_{\pi_t}(X, Y, Z)$ using policy $\pi_t$, which we use (potentially together with data collected under previous policies $\pi_{<t}$) to estimate $Q^+(\pi_t), Q^{-}(\pi_t)$ and then aim to propose a new policy $\pi_{t+1}$ that yields a smaller gap in bounds: $\Delta(\pi_{t+1}) < \Delta(\pi_t)$.

\xhdr{Connections to other settings.} Besides the `indirect experimentation' lens, the setting in \cref{eq:setting} can be viewed from different perspectives.
While they all share the same mathematical formulation, they start from different conceptions of $Z$.
The literature on \emph{invariant causal learning} \citep{peters2016causal,heinze2018invariant} may interpret $Z$ in \cref{eq:setting} as an environment indicator and $f_0$ is invariant across these environments.
Each policy corresponds to an environment ($Z$ cannot be designed, but we can collect data for different $Z$) with its own distribution of $X$.
The heterogeneity across environments can be leveraged to learn about the invariant mechanism $f$.
In \emph{instrumental variables} \citep{pearl2009causality,angrist2008mostly}, one typically assumes to encounter a situation as in \cref{eq:setting} with the assumptions (1)-(3) where $X, Y, Z$ are observed `in the wild'.
Valid instruments $Z$ then sometimes allow for (partial) identification and estimation of $f_0$ despite the unobserved confounder $U$.
Again, there is typically no active experimentation in IV estimation, but $Z$ is taken `as is'.
Finally, experimentation on $Z$ in our setting can also be interpreted as \emph{soft interventions} \citep{jaber2020causal,pearl2009causality} on $X$, where we intervene on $X$ setting it to follow a given distribution instead of a fixed value.
Instead of allowing arbitrary soft interventions, \cref{eq:setting} restricts us to distributions $P(X \given Z) \pi(Z)$ for feasible experimental policies $\pi(Z)$.
Importantly, unlike proper soft interventions, we do not get rid of unobserved confounding.
Ultimately, we face a sequence of IV estimation tasks.
Experimental access to $Z$ justifies the instrumental variable assumption (2) $Z \indep U$, and indirect experiments are setup by design such that (1) $Z \dep X$ holds.
Assumption (3) $Z \indep Y \given X, U$ instead may be hard to justify in practice and limits the types of experimentation allowed in our framework.

\subsection{Related Work}

\xhdr{IV estimation.}
Instrumental variables are a cornerstone of cause-effect estimation in econometrics \citep{angrist2008mostly} at least since their use by Philip G.~Wright in 1928 \citep{wright1928tariff,stock2003retrospectives}.
Even though valid instruments are hard to come by in practice \citep{hernan2006instruments}, generalizations to settings with relaxed assumptions have sparked renewed interest in IV estimation from the machine learning community not least due to successful applications in Mendelian randomization \citep{sanderson2022mendelianrandomization,didelez2007mendelianrandomization, Legault2024} or on microbiome data \citep{Sohn2019,Wang2020,ailer2021causal}.
In particular, we build on recent advances in nonlinear/nonparametric IV estimation \citep{newey2003instrumental,lewis2018adversarial,singh2019kernel,hartford2017deep,saengkyongam2022exploiting,zhang2020maximum,xu2020learning}, with a specific focus on the minimax formulation using the generalized method of moments \citep{liao2020provably,dikkala2020minimax,bennett2023minimax,zhang2023IVKMML,liao2020provably,bennett2019deep,muandet2019dual,bennett2022inference}.
In addition, we do not assume identifiability in line with recent approaches to partial identification and bounding causal effects in IV settings relaxing various discreteness and additivity assumptions \citep{kilbertus2020class,Padh2022Bounding,frauen2024sharp,melnychuk2024partial,wang2018bounded,gunsilius2019path,hu2021generative,zhang2021bounding}.

\xhdr{Sequential experiment design.}
The work closest to ours is perhaps by \citet{ailer2023underspecified}, where they consider a similar setting of sequential indirect experiment design via IV estimation. The key differences are that they only consider a fully linear setting ($h, f_0$ linear), aim for full identification of $f_0$ (even though the setting is underspecified), and only provide non-adaptive strategies, i.e., the sequence of experiments does not depend on data collected throughout the experiments.
Other work on indirect experimentation either assumes no unobserved confounding \citep{singh23active} or focuses primarily on sample efficiency and variance reduction when aiming for full identification of $f$ \citep{chandak2024adaptive}.
Adaptive learning has also been used in another context for cause effect estimation by `deciding what to observe' \cite{gupta2021active}.
While clearly related, we highlight that most of the literature on active experimentation for causality aims at learning the causal structure (instead of estimating properties of $f$) from access to interventions, i.e., direct experiments (instead of more realistic indirect experiments), e.g., \citep{kugelgen2019optimal,sverchkov2017activereview,agrawal2019abcd,He2008Active,gamella2020active,elahi2024adaptive,Zemplenyi2021Bayesian}.
Additionally, we require a strategy that is path-dependent, while in active learning the objective function remains the same over the different iterations.

\section{Methodology}\label{sec:method}

\subsection{Minimax Instrumental Variable Estimation}

For the general instrumental variable setting in \cref{eq:setting} that we face in each round, we aim to solve $\E[Y - f(X) \given Z] = 0$. To solve this, we follow the conditional moment formulation \citep{bennett2019deep,dikkala2020minimax,bennett2022inference,bennett2023minimax}, which builds on the observation that we only want to consider functions $f$, such that the residuals, $U_{f} := Y - f(X)$ are independent of the instrument $Z$. 
If two random variables are independent, $U_{f} \indep Z$, then $E[g(Z) U_{f}] = E[g(Z)]E[U_{f}]$ for all test functions $g$. 
We search for $f$ that minimize the residual while maintaining the independence constraint by solving a minimax optimization that minimizes $f$ over a worst case $g$. 
The integral equation $\E[Y - f(X) \given Z] = 0$ for $f$ can be written as $T f = r_0$, where $T$ is a linear, bounded operator mapping $f: \cX \to \bR$ to $T f := \E[f(X) \given Z]: \cZ \to \bR$ and $r_0 = \E[Y \given Z]$.
This is typically an ill-posed inverse problem where both $T$ and $r_0$ are not known but have to be estimated from i.i.d.\ data $\cD = \{(x_i, y_i, z_i)\}_{i=1}^{\nsample}$. 
Assuming $f$ to come from a set of hypothesis functions $\cF \subset L_2(X)$ and $\testfun \subset L_2(Z)$ a class of `witness functions' or `test functions', we can write the non-parametric IV problem as a minimax optimization problem of the form \citep{dikkala2020minimax,bennett2023minimax}
\begin{equation}
    \arginf_{f \in \cF} \sup_{g \in \testfun} L(f, g)
\end{equation}
for some objective function $L$ mapping tuples of hypotheses $f$ and test functions $g$ to $\bR$.
While most existing approaches start by assuming that there exists a unique solution of $T f = r_0$ \citep{lewis2018adversarial,liao2020provably,muandet2019dual,dikkala2020minimax,bennett2019deep}, based on Lagrange multipliers \citet{bennett2023minimax} recently have shown that under a source condition on $T$ and mild realizability assumptions regarding the capacity of the (statistically restricted) function classes $\cF$ and $\cG$ \citep[Assumptions 2, 3, 4]{bennett2023minimax}, the solution of $Tf=r_0$ is given by
\begin{equation}\label{eq:general_minimax}
    f_0 = \argmin_{f \in \cF} \max_{g \in \testfun} \tfrac{1}{2} \| f \|^2_{L_2(X)} + \langle r_0 - Tf, g \rangle_{L_2(Z)}
    = \argmin_{f \in \cF} \max_{g \in \testfun} \tfrac{1}{2} \E[f^2(X)] + \E[(Y - f(X)) g(Z)]\:.
\end{equation}
This formulation of the objective is immediately amenable to empirical estimation from $\cD$ by using empirical averages for the expectations.
\Cref{eq:general_minimax} can be viewed as a penalized optimization, where $\tfrac{1}{2}\E[f^2(X)]$ penalizes the solution towards the minimum $L_2$ norm.

The inner maximization over test functions ensures that we only consider hypotheses $f\in \cF$ compatible with the conditional moment restrictions that enforce the required independence of the residuals.
While there may still be multiple hypotheses $f\in \cF$ compatible with observational data in this way, \citeauthor{bennett2023minimax} pick the $f$ with the least $L_2$ norm, yielding a unique solution.
In the following, we build on this intuition to estimate bounds on a chosen scientific query $Q$ among all $f$ compatible with the observed data and assumed structure (via the moment restrictions).

\subsection{Sequential Policy Learning for Bounding Functionals}

The source condition required for \cref{eq:general_minimax} is that $r_0$ lies in the range of $TT^*$, where $T^*$ is the adjoint of $T$.
As \citeauthor{bennett2023minimax} point out, this is a nontrivial restriction and paramount in ensuring uniqueness in \cref{eq:general_minimax}.
Without this source condition the IV problem is still ill-posed in that there may be multiple solutions $f \in \cF$ compatible with the observed data $\cD$ (even in the infinite data limit).
One of our key realizations is that we can still meaningfully make use of the penalized optimization formulation in \cref{eq:general_minimax} to compute bounds on targeted queries about $f$.

Let $Q: \cF \to \bR$ be a functional on hypotheses $\cF$ that captures the scientific query. For any given joint distribution $P_{\pi}(X, Y, Z)$ induced by experimental policy $\pi$, to bound $Q[f_0]$, we can compute
\begin{equation}\label{eq:bounds}
    Q^{\pm}(\pi) = \inf_{f \in \cF} \sup_{g \in \testfun} \pm Q[f] + \E[(Y - f(X))g(Z)]\:,
\end{equation}
where again the inner maximization aims at only considering hypotheses compatible with the conditional moment restrictions $\cFcompat \subset \cF$ and the outer minimization aims at finding the $f \in \cFcompat$ with the largest/smallest value of $Q[f]$.
The realizability assumption that $f_0 \in \cF$ then clearly implies that $f_0 \in \cFcompat$ and the bounds $Q^{\pm}$ will contain $Q[f_0]$.
We note that specially when $X$ is of higher dimension than $Z$, in practice neither $f$ nor the much more narrow query $Q[f]$ are guaranteed to be identified. However, depending on $\pi$, we may shape $P_{\pi}$ such that $\cFcompat$ becomes restricted enough to obtain informative bounds $Q^{\pm}$ on $Q[f_0]$.
\citet{bennett2022inference} work out in detail when a linear functional of $f_0$, but not $f_0$ itself, is identified.
If $P_{\pi}$ is such that $T$ satisfies the source condition, certain queries $Q$ (such as the $Q[f] = \tfrac{1}{2} \|f\|^2_{L_2}$ as a canonical example) may actually be identified and the bounds $Q^{\pm}$ will coincide.

We can now formulate our main policy learning goal as
\begin{equation}\label{eq:policylearning}
    \inf_{\pi \in \Pi(\cZ)} \Delta(\pi)\:, \qquad \text{where } \Delta(\pi) = Q^+(\pi) - Q^-(\pi)\:,
\end{equation}
with $\Pi(\cZ) \subset \cP(\cZ)$ being a subset of implementable experimentation policies.
In practice, we aim at approximating this optimization via sequential updates of $\pi_t$ over $T$ rounds, where we observe $\nsample$ i.i.d.\ samples of $P_{\pi_t}$ at each round. 
This means that given $\pi_t$, we seek to choose $\pi_{t+1}$ such that $\Delta(\pi_{t+1}) < \Delta(\pi_t)$.
The final output of our method is the interval $[Q^-(\pi_T), Q^+(\pi_T)] \ni Q[f_0]$.
A key advantage of this formulation is that we need not assume identifiability of $f$ nor of $Q[f]$, will obtain valid (albeit potentially loose) bounds when the experimentation budget $T$ is restricted, and that potential identifiability will be `captured automatically' by the bounds coinciding.
If after $T$ rounds the bounds are still not informative, the experimenter can choose more expressive experiments (different $\Pi(\cZ)$ and even different $\cZ$), a more specific query $Q$, or put stronger assumptions on the hypothesis class $\cF$.

\subsection{Solving the Minimax Optimization Problem}

Let us unpack the sequential policy learning problem.
At each round $t \in [T]$, we need to solve two minimax problems for $Q^{\pm}(\pi_{t})$ where the objective is estimated from finite data and then take a minimization step over the difference of the two solutions.
To further complicate this `tri-level' (min + minimax) optimization problem, we are optimizing over two function spaces ($\cF, \testfun$) and a family of distributions ($\Pi(\cZ)$).
In the remainder of this section we develop techniques to render these optimizations feasible in practice for suitable choices of $\cF, \testfun, \Pi(\cZ)$.

Existing solutions for the minimax problems typically employ adversarial optimization (iterative gradient-based optimization over parametric function spaces) or kernel-based techniques. Since the former are usually hard and expensive to optimize reliably even without the outer minimization, we focus on the latter approach using reproducing kernel Hilbert spaces (RKHS) for test functions $\testfun = \cH_Z$ induced by a characteristic kernel $k_Z: \cZ \times \cZ \to \bR$.
Then there exists a closed form expression for a consistent estimate of the inner supremum.
\begin{theorem}[\citet{zhang2020maximum}]\label{thm:closed_form_supremum}
    If $\E[(Y - f(X))^2 k_Z(Z,Z)] < \infty$,
    \begin{equation}
         \Big ( \frac{1}{\nsample^2} \sum_{i=1}^\nsample \sum_{j=1}^\nsample (y_i - f(x_i) ) k(z_i, z_j) (y_j - f(x_j)) \Big )^{\frac{1}{2}} \\
    \end{equation}
    consistently estimates $\sup_{g\in \cH_Z, \|g\|\le 1} \langle r_0- Tf, g\rangle_{\cH_Z}$ from data $\cD$.
\end{theorem}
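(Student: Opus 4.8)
The plan is to collapse the inner supremum to the norm of a single Riesz representer, expand that norm into a double expectation, and then recognize the displayed double sum as the plug-in estimator of that norm. First I would rewrite the functional being maximized. Using the definitions $Tf = \E[f(X)\given Z]$ and $r_0 = \E[Y \given Z]$ together with the tower property, the functional $g \mapsto \langle r_0 - Tf, g\rangle$ inside the supremum is the moment functional $g \mapsto \E[(Y - f(X))\, g(Z)]$ on the unit ball of $\cH_Z$. Invoking the reproducing property $g(z) = \langle g, k_Z(z, \cdot)\rangle_{\cH_Z}$ and pulling the (Bochner) expectation inside the inner product, this functional equals $\langle g, \mu_f\rangle_{\cH_Z}$ with representer $\mu_f := \E[(Y - f(X))\, k_Z(Z, \cdot)]$. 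The moment hypothesis is exactly what guarantees $\mu_f$ is a well-defined element of $\cH_Z$: since $\|k_Z(Z,\cdot)\|_{\cH_Z} = k_Z(Z,Z)^{1/2}$, Bochner integrability follows from $\E\|(Y - f(X))\, k_Z(Z,\cdot)\|_{\cH_Z} = \E[\,|Y - f(X)|\, k_Z(Z,Z)^{1/2}\,] \le \E[(Y - f(X))^2 k_Z(Z,Z)]^{1/2} < \infty$ by Cauchy--Schwarz.

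Second, I would apply Cauchy--Schwarz in $\cH_Z$: the supremum of $\langle g, \mu_f\rangle_{\cH_Z}$ over $\|g\|_{\cH_Z}\le 1$ equals $\|\mu_f\|_{\cH_Z}$, attained at $g = \mu_f/\|\mu_f\|_{\cH_Z}$. Expanding the squared norm via the reproducing property over two independent copies $(X,Y,Z)$ and $(X',Y',Z')$ gives the population target
\begin{equation*}
    \|\mu_f\|_{\cH_Z}^2 = \E\big[(Y - f(X))(Y' - f(X'))\, k_Z(Z, Z')\big]\:,
\end{equation*}
whose finiteness is immediate from the same moment bound and the positive-semidefinite inequality $|k_Z(z,z')| \le k_Z(z,z)^{1/2} k_Z(z',z')^{1/2}$.

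Third, I would observe that the displayed estimator equals $\|\hat\mu_f\|_{\cH_Z}$, where $\hat\mu_f := \tfrac{1}{\nsample}\sum_{i=1}^{\nsample}(y_i - f(x_i))\, k_Z(z_i, \cdot)$ is the empirical mean of the i.i.d.\ $\cH_Z$-valued random elements $(Y - f(X))\, k_Z(Z,\cdot)$; indeed, expanding $\|\hat\mu_f\|_{\cH_Z}^2$ through $\langle k_Z(z_i,\cdot), k_Z(z_j,\cdot)\rangle_{\cH_Z} = k_Z(z_i,z_j)$ reproduces the double sum (which is therefore automatically nonnegative). Consistency then follows from the law of large numbers for Hilbert-space-valued random variables: the first-moment Bochner condition established above yields $\hat\mu_f \to \mu_f$ in $\cH_Z$, and continuity of the norm and of $t \mapsto t^{1/2}$ gives $\|\hat\mu_f\|_{\cH_Z} \to \|\mu_f\|_{\cH_Z}$, which is the claimed supremum. (Alternatively one can read the double sum as a V-statistic and control the $O(1/\nsample)$ diagonal bias separately, but the mean-embedding route avoids this bookkeeping.)

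I expect the main obstacle to be not the Riesz/Cauchy--Schwarz algebra but the uniformity needed when $f$ is itself chosen from $\cF$ using the data: the statement above is a \emph{pointwise} consistency result for fixed $f$, whereas the outer minimization selects $f$ adaptively, so one ultimately wants the convergence to hold uniformly over $\cF$, which is where the statistical capacity assumptions on $\cF$ alluded to earlier enter. I would also note that the characteristic property of $k_Z$ is not actually used in this consistency argument; it is needed only to ensure that the estimated supremum vanishes exactly when the conditional moment restriction $\E[Y - f(X)\given Z] = 0$ holds, which is what makes the criterion a faithful witness for the outer minimax problem.
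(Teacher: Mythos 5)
Your proof is correct. Note that the paper itself contains no proof of this statement---it is imported as-is from \citet{zhang2020maximum}---so the relevant comparison is with that source, and your argument (tower property to turn the pairing into the moment functional $g \mapsto \E[(Y-f(X))g(Z)]$, the reproducing property to obtain the residual embedding $\mu_f = \E[(Y-f(X))k_Z(Z,\cdot)]$, Cauchy--Schwarz to collapse the supremum to $\|\mu_f\|_{\cH_Z}$, and expansion of the squared norm into the double expectation matched by the double sum) is precisely the standard derivation given there. The only real difference is the consistency step: the cited work leans on classical V-statistic theory, while you use the Bochner/Mourier law of large numbers in $\cH_Z$ together with continuity of the norm and of $t \mapsto t^{1/2}$, which is equally valid and, as you note, sidesteps the diagonal-bias bookkeeping. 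Two of your side remarks deserve emphasis: the theorem's notation $\langle r_0 - Tf, g\rangle_{\cH_Z}$ is only correct under your reading as the moment functional (taken literally as an $\cH_Z$ inner product it would equal $\|r_0 - Tf\|_{\cH_Z}$, which the double sum does not estimate), and the result is pointwise in $f$, so the uniformity over $\cF$ needed when $f$ is selected from data is genuinely a separate issue that the paper addresses only through the capacity assumptions it inherits from the minimax IV literature.
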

Hence, for $\nsample$ samples $\cD \iidsim P_{\pi}$ we reduce the minimax optimizations in \cref{eq:bounds} to
\begin{equation}\label{eq:bounds_only_min}
    Q^{\pm}(\pi) = \inf_{f \in \cF} \mp Q[f] + \Bigl( \frac{1}{\nsample^2} \sum_{i=1}^\nsample \sum_{j=1}^\nsample (y_i - f(x_i)) k(z_i, z_j) (y_j - f(x_j))\Bigr)^{\tfrac{1}{2}} \:.
\end{equation}

One option to solve \cref{eq:bounds_only_min} efficiently is via gradient-based methods leveraging highly efficient auto-differentiation packages for parametric function families, such as choosing neural networks for $\cF$, i.e., we optimize over a finite-dimensional real-vector $\theta$---the weights of the neural network.
Our framework may also benefit from other existing approaches to the minimax optimization from the literature, e.g., in \citet{muandet2019dual,dikkala2020minimax}.
Parametric functions may have the advantage that they render the computation of $Q[f_{\theta}]$ feasible.
For example, the causal effect of $X_i$ on $Y$ at $x^*$ given by $Q[f_{\theta}] = (\partial_i f_{\theta})(x^*)$ can be obtained directly from automatic differentiation as well.
Global functionals such as $Q[f_{\theta}] = \int_{\cX} \psi(x) f_{\theta}(x) \, dx$ for some function $\psi$ may pose greater difficulties as the integral over $\cX$ is typically infeasible.
Still, this approach ultimately requires a bi-level optimization with a substantial amount of tuning.

In order to also obtain a closed-form estimate of the minimization over $\cF$, we have to specify the functional $Q$.
To this end, we focus on bounded linear functionals. 
One example is the causal-effect of an individual input $(\partial_i f)(x^*)$, because it is arguably one of the most common and relevant scientific queries: how does a small change of a specific variable affect the outcome?
When following a fully non-parametric approach by also assuming hypotheses to come from an RKHS $\cF = \cH_X$ induced by a characteristic kernel $k_X: \cX \times \cX \to \bR$, we can also estimate the minimization over $\cF$ in closed form for the causal effect---a local, linear functional.
The key realization is that functionals of functions represented as linear combinations of RKHS functions can be written as linear functions of the coefficients.

\begin{restatable}{theorem}{closedformours}\label{thm:closed_form_ours}
Assume functions in $\cH_Z, \cH_X$ to have bounded variation, (w.l.o.g.) images contained in $[-1,1]$, and for $h \in \cH_Z$ also $-h \in \cH_Z$. Denote by $K_{XX}, K_{ZZ} \in \bR^{\nsample \times \nsample}$ the empirical kernel matrices of $\cD$, let $k_X$ be continuously differentiable, and let $\lamfirst, \lamsecond, \lamfeas \ge 0$. 
Further, fix $Q[f]$ to be a bounded linear functional, and write $f_{\theta}(\cdot) = \sum_{i=1}^{\nsample} {\theta}_i k(x_i, \cdot)$.
Then the solutions to the regularized minimax problems
\begin{equation}\label{eq:regularizedproblem}
    f^{\pm} = \argmin_{f \in \cH_X}  \mp Q[f] + \lamfeas \sup_{g \in \cH_Z} \langle r_0 - T f, g \rangle_{\cH_Z} + \lamsecond \| f \|_{\cH_X} - \lamfirst \| g\|_{\cH_Z}
\end{equation}
are consistently estimated by $f_{\hat{\theta}^{\pm}}$ with
\begin{equation}\label{eq:regularizedsolution}
    \hat{\theta}^{\pm} = (K_{XX} K_{ZZ} K_{XX} + 4 \frac{\lamfirst \lamsecond}{\lamfeas} K_{XX})^+ \Big (K_{XX} K_{ZZ} y \mp \frac{1}{\lamfeas} Q[K_{X \cdot}](x^*) \Big )\:,
\end{equation}
where $(Q[K_{X \cdot}])(x^*) \in \bR^{\nsample}$ is the vector $((Q[k_X(x_1, \cdot)])(x^*), \ldots, (Q[k_X(x_{\nsample}, \cdot)])(x^*)$.
Note that these general bounded linear functionals can be computed analytically for many common kernels such as linear, polynomial, or radial basis function (RBF) kernels.
\end{restatable}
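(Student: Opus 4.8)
The plan is to collapse the infinite-dimensional regularized saddle problem in \cref{eq:regularizedproblem} into a finite-dimensional strongly convex quadratic program in the coefficient vector $\theta$, solve its first-order conditions in closed form to obtain \cref{eq:regularizedsolution}, and then lift the finite-sample solution to a consistency statement for the population minimizers $f^\pm$. First I would eliminate the inner supremum over $g$. Because $\cH_Z$ is symmetric ($-h\in\cH_Z$), the sup of the linear term $\langle r_0-Tf,g\rangle_{\cH_Z}$ over a norm ball equals the witness norm, and by the bounded-variation and bounded-image assumptions the moment $\E[(Y-f(X))^2 k_Z(Z,Z)]<\infty$ holds along the relevant $f$, so \cref{thm:closed_form_supremum} applies. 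The regularized inner problem is concave in $g$ with maximizer proportional to the empirical residual witness; substituting it back replaces the penalized supremum by (a scalar multiple of) the kernel quadratic form $(y-K_{XX}\theta)^\top K_{ZZ}(y-K_{XX}\theta)$, and the $\tfrac14$ factor from this conjugation is what ultimately produces the coefficient $4\lamfirst\lamsecond/\lamfeas$.

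Next I would restrict the outer variable $f$ to a finite-dimensional span via a representer-theorem argument: the reduced objective depends on $f$ only through the evaluations $(f(x_i))_i$, the norm $\|f\|_{\cH_X}$, and the linear functional $Q[f]$. Decomposing $f=\sum_i\theta_i k_X(x_i,\cdot)+f_\perp$, the evaluation and norm contributions push $f_\perp$ toward zero, with the only residual component lying along the Riesz representer of $Q$; for the causal-effect functional the derivative-reproducing property of the continuously differentiable kernel identifies this representer with $\partial_i k_X(x^*,\cdot)$. Writing $f_\theta=\sum_i\theta_i k_X(x_i,\cdot)$ then gives $\|f_\theta\|^2_{\cH_X}=\theta^\top K_{XX}\theta$, $(f_\theta(x_i))_i=K_{XX}\theta$, and, by linearity of $Q$, $Q[f_\theta]=\theta^\top (Q[K_{X\cdot}])(x^*)$. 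Substituting yields a quadratic in $\theta$ that is strongly convex for $\lamsecond>0$; setting its gradient to zero gives a linear system whose pseudo-solution is exactly \cref{eq:regularizedsolution}, with $(\cdot)^+$ accommodating rank-deficiency of $K_{XX}K_{ZZ}K_{XX}$ and the signs $\mp$ on the $Q$-term fixed by the convention that $f^+$ maximizes and $f^-$ minimizes $Q[f]$ over the moment-feasible set.

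It then remains to upgrade this finite-sample identity to the claimed consistency for the population minimizers $f^\pm$ of \cref{eq:regularizedproblem}. Here I would combine the consistency of the inner-supremum estimator from \cref{thm:closed_form_supremum} with a uniform-convergence argument for the reduced outer objective and invoke stability of the $\argmin$: strong convexity from $\lamsecond$ yields a unique minimizer and a quantitative modulus of continuity, so uniform convergence of the empirical objective transfers to convergence of minimizers. This step relies on the same capacity and realizability assumptions used to justify \cref{eq:general_minimax}. Finally, to support the closing remark I would exhibit $(Q[k_X(x_i,\cdot)])(x^*)$ explicitly for standard kernels---for the RBF kernel the required partial derivatives are the kernel value times an affine factor in $x_i-x^*$, and for linear and polynomial kernels they are polynomial---so that the vector $(Q[K_{X\cdot}])(x^*)$ is directly computable.

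I expect the main obstacle to be the consistency step intertwined with the representer reduction. Unlike classical kernel ridge regression, the Riesz representer of the target functional (e.g.\ $\partial_i k_X(x^*,\cdot)$) generally does \emph{not} lie in the sample span $\{k_X(x_i,\cdot)\}_i$, so restricting $f$ to $f_\theta$ is exact only asymptotically; one must therefore control the bias from this finite parametrization and the plug-in error of the \cref{thm:closed_form_supremum} estimator simultaneously, while preserving validity of the resulting interval, i.e.\ that it still contains $Q[f_0]$.
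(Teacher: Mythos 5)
Your proposal follows essentially the same route as the paper's own proof: collapse the inner supremum via conjugation of the norm penalty into the quadratic form $(y-K_{XX}\theta)^\top K_{ZZ}(y-K_{XX}\theta)$ (the origin of the $4\lamfirst\lamsecond/\lamfeas$ factor), parametrize $f$ in the span of the kernel sections so that linearity of $Q$ gives $Q[f_\theta]=\theta^\top (Q[K_{X\cdot}])(x^*)$, and solve the resulting convex quadratic by its first-order conditions with a pseudo-inverse, inheriting consistency from the penalized minimax estimator. If anything you are more careful than the paper, which silently restricts the minimizer to the sample span and never addresses the issue you flag at the end---that the Riesz representer of $Q$ generally lies outside that span, so the restriction is not exact at finite $\nsample$ and the consistency claim requires the kind of argmin-stability argument you sketch rather than the paper's brief appeal to the arguments behind its Theorem on the $L_2$-penalized problem.
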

One common example is the partial derivative at $x^*$:
\begin{equation}\label{eq:regularizedsolution_derivative}
    \hat{\theta}^{\pm} = (K_{XX} K_{ZZ} K_{XX} + 4 \frac{\lamfirst \lamsecond}{\lamfeas} K_{XX})^+ \Big (K_{XX} K_{ZZ} y \mp \frac{1}{\lamfeas} (\partial_i K_{X \cdot})(x^*) \Big )\:,
\end{equation}
where $(\partial_i K_{X \cdot})(x^*) \in \bR^{\nsample}$ is the vector $((\partial_i k_X(x_1, \cdot))(x^*), \ldots, (\partial_i k_X(x_{\nsample}, \cdot))(x^*)$.

We defer all proofs of our theoretical statements to \cref{app:proofs}.

We can now substitute these closed-form estimates for the minimax problems into our policy learning objective $\Delta(\pi) = Q[f_{\hat{\theta}^+}] - Q[f_{\hat{\theta}^-}]$ in \cref{eq:policylearning}
\begin{equation}\label{eq:closedformdelta}
    \Delta(\pi) = - \frac{2}{\lamfeas} \Bigl((K_{XX} K_{ZZ} K_{XX} + 4 \frac{\lamfirst \lamsecond}{\lamfeas} K_{XX} )^+ (Q[K_{X \cdot}])(x^*)\Bigr)^{\top} (Q[K_{X \cdot}])(x^*)\:.
\end{equation}

\xhdr{(In)validity of bounds.}
Our general policy learning formulation in \cref{eq:bounds,eq:policylearning} involves nonlinear, nonconvex optimizations over function spaces and families of distributions.
Since we cannot provably obtain global optima to those, we cannot guarantee valid bounds.
Even for the RKHS setting in \cref{thm:closed_form_ours} for linear $Q$, while the estimates $\hat{\theta}^{\pm}$ are consistent, for finite sample estimation we have to regularize both the hypotheses and witness functions ($\lamfirst, \lamsecond > 0$) to obtain numerically stable and reliable estimates.
Clearly, those pose restrictions on the search spaces for $f, g$ and may thus lead to invalid bounds.
We may expect real-world mechanisms $f_0$ to be relatively smooth such that these regularization terms do not render our estimates invalid in practice.
Moreover, since we made no explicit assumptions about the functional $Q$, using $Q[f]$ as a penalty (as compared to, e.g., $\tfrac{1}{2}\|f\|_{L_2}$) may not yield a unique solution to \cref{eq:regularizedproblem} even in the infinite data limit and for $\lamfirst = \lamsecond = 0$.
While uniqueness can be recovered for strictly convex, coercive, and lower semicontinuous $Q$, this implies non-trivial restrictions on the allowed scientific queries, which we may not be willing to tolerate.
Therefore, we introduce a `penalization parameter' $\lamfeas$ (in front of the supremum term rather than $Q[f]$ for convenience) that allows us to empirically trade off the scale/importance of $Q[f]$ and the conditional moment restriction in practice.
Enforcing conditional moment restrictions via a minimax formulation from finite samples typically exhibits high variance.
Therefore, a lack of provably valid bounds due to practical regularization does not impair the usefulness of our method in a setting where we aim to learn about a potentially unidentified scientific query from limited experimental access.
In particular, we argue that overly conservative bounds are still more useful than likely invalid point estimates based on one-size-fits-all assumptions required for identifiability.

Continuing with the closed-form expression in \cref{eq:closedformdelta} for the gap between maximally and minimally possible values of the scientific query -- here the causal effect of changing one treatment component around a base level $X^*$ -- over all hypotheses that are compatible with the observed data, we can now seek to represent the experimentation strategy $\pi$ in a way that allows us to sequentially improve it.

\subsection{Sequential Experiment Selection}\label{sec:sequentialexps}

We now describe different approaches to sequentially update the experimentation policy. 
The experimentation budget $T$ and the number of samples per round $\nsample$ are fixed. We denote by $\pi_t$ the policy in round $t$ and by $\cD^{(t)}$ ($\cD^{(\le t)}$) the data collected in (up to) round $t$. We denote by $Q^{\pm}_t$ and $\Delta_t$ the bounds and gap between them, respectively, estimated in round $t$ and explicitly mention which data was used for the estimates.
Some strategies rely on a `meta-distribution' over policies $\Pi(\cZ)$, which we denote by $\Gamma$.
A sample $\pi \sim \Gamma$ is thus a policy in $\Pi(\cZ) \subset \cP(Z)$.
While the `final result' of all strategies are just the final bounds $Q^{\pm}_T$, we report $Q^{\pm}_t$ also for suitable $t < T$ for different strategies to compare efficiency.

\xhdr{Simple baseline.}
In modern methods for experiment design, random exploration is often surprisingly competitive \citep{ailer2023underspecified}.
We consider a simple entirely non-adaptive baseline called \textbf{random}, which independently samples $\pi_t \sim \Gamma$ for $t \in [T]$ and estimates $Q^{\pm}_t$ on $\cD^{(\le t)}$.

\xhdr{Locally guided strategies.}
Next, we look to simple adaptive strategies that leverage the locality of $Q$.
If $Q$ is determined in a small neighborhood around some $x^* \in \cX$, a useful guiding principle for $\pi$ is to aim at concentrating the mass of the marginal $P_{\pi}(X)$ around $x^*$.
The causal effect $(\partial_i f)(x^*)$ or the average treatment effect $\E[Y \given do(X=x^*)]$ are examples of local queries.
Estimating such local relationships in indirect experimentation without unobserved confounding has recently been studied from a theoretical perspective by \citet{singh23active}.
They propose a simple explore-then-exploit strategy and prove favorable minimax convergence rates depending on the complexities of $h$ and $f_0$ as well as the noise levels in the unconfounded setting, i.e., $X$ and $Y$ have independent noises and $h$ does not depend on $U$.
We highlight that these strategies do not use the intermediate estimated bounds as feedback to select the next policy $\pi$.
Pseudocode for the different strategies is in \cref{app:algorithmbox}.
\begin{enumerate}
    \item \textbf{Explore then exploit (EE)} (inspired by \citet{singh23active}): Split the budget into $T = \nexplore + \nexploit$.
    Sample $\pi_t \sim \Gamma$ for $t \in [\nexplore]$. Then, fit $\pi$ as a (parametric) distribution on the $Z$ values of the $\nlookback \le \nexplore \cdot \nsample$ samples in $\cD^{(\le \nexplore)}$ closest to $x^*$ (i.e., their $X$ values have the smallest distance to $x^*$ for some suitable distance measure on $\cX$).
    Set $\pi_t := \pi$ for $t \in \{\nexplore + 1, \ldots, T\}$ and estimate $Q^{\pm}_t$ from $\bigcup_{t = \nexplore + 1}^{T} \cD^{(t)}$. The split between $\nexplore$ and $\nexploit$ may be informed by $\nsample$ to maximize exploration while retaining sufficiently many samples during exploitation for a low variance estimate of $Q^{\pm}_T$. 
    \item \textbf{Alternating explore exploit (AEE)}: Throughout all rounds, keep a list of all observed samples sorted by the distance of their $X$ values to $x^*$. For odd $t \in [T] \setminus \{T\}$ independently sample $\pi_t \sim \Gamma$. For even $t \in [T]$ and $t=T$ fit $\pi_t$ as a (parametric) distribution to the $\min\{\nlookback, n\cdot t\}$ nearest samples to $x^*$ for some $K \in \bN$ and estimate $Q^{\pm}_t$ on $\bigcup_{\text{even } t}\cD^{(t)}$.
\end{enumerate}
There is some freedom in which data is used to estimate $Q^{\pm}_t$ and using $\cD^{(\le t)}$ is always a valid choice.
For local queries it can still be beneficial in practice to discard data far away from $x^*$.
More broadly, the above strategies could potentially be further improved by weighing samples in $\cD^{(t)}$ inversely proportional to the distance of (the mean of) $\pi_t$ from $x^*$ when estimating $Q^{\pm}_t$ from $\cD^{(\le t)}$.
We only report the vanilla versions described above in our experiments and leave further variance reduction techniques for future work.
In all experiments, we use $\nlookback = \nsample$ and the Euclidean distance on $\cX$.

While these locally guided strategies appear rather limited in which information from previous rounds is used in updating the policy, \citet[Sec.~7]{singh2019kernel} make a convincing case for why the simple types of active learning in EE and AEE greatly improve over the naive random strategy and may be competitive with fully adaptive experiments.
For the fixed policy in $\pi$ and the meta-distribution $\Gamma$ one should arguably err on the side of `uninformative' priors, i.e., covering all of $\cZ$ mostly uniformly.

\xhdr{Targeted Adaptive Strategy.}
We now develop an adaptive strategy denoted by \textbf{adaptive} that actually uses the intermediate estimates of $Q^{\pm}_t$ to update $\pi_t$.
A natural choice for policy updates is via gradient descent on our objective $\Delta$
\begin{equation}\label{eq:policyupdate}
    \pi_{t+1} = \pi_t - \alpha_t \nabla_{\pi} \Delta(\pi)|_{\pi = \pi_t}\:,
\end{equation}
for step sizes $\alpha_t > 0$ at round $t$.
In practice, we assume parametric policies $\pi_t := \pi_{\phi_t}$ with parameters $\phi_t \in \Phi \subset \bR^d$.
With the log-derivative trick \citep{williams1992simple} we then compute
\begin{equation}\label{eq:logderiv}
    \nabla_{\phi} \Delta(\pi_{\phi}) = \E_{X, Y, Z \sim P_{\pi_{\phi}}(X, Y, Z)} \left [\Delta(\pi_{\phi}) \nabla_{\phi} \log \pi_{\phi} \right ]\:.
\end{equation}
and can update $\pi_{\phi_t}$ akin to the REINFORCE algorithm with horizon one.
Since the gradient in \cref{eq:policyupdate} is evaluated at $\pi_{\phi_t}$ from which we have samples available, we can directly use the empirical counterpart of the expectation in \cref{eq:logderiv}.
One caveat is that our objective $\Delta$ cannot be computed on the `instance level' for individual samples $(x_i, y_i, z_i)$, but is already an estimate based on multiple samples.
In practice, we therefore split the $\nsample$ observations per round into random batches, and average over the $\Delta(\pi_{\phi})$ estimate for each batch times the mean score function across the batch in \cref{eq:logderiv}.

An extension to use data $\cD^{(\tau)}$ from a previous round $\tau < t$ is to reweigh the term within the expectation in \cref{eq:logderiv} by $\pi_{\phi_{t}}(Z) / \pi_{\phi_{\tau}}(Z)$ (and then average across rounds).
While this allows us to use more data for the gradient estimates, this approach does not yield an unbiased estimator and typically also increases variance due to arbitrarily small reweighing terms.
In practice, we may want to only consider data from a small number of previous rounds, where we expect the policy not to have changed too much.
Again, we believe our technique could benefit from further variance reduction techniques \citep{chandak2024adaptive}, but leave these for future work.

While our formulation in \cref{eq:policyupdate,eq:logderiv} can be efficiently implemented for any parametric choice of $\pi_{\phi}$, in our experiments we choose a multivariate Gaussian mixture model (GMM) $\pi_{\phi}(z) = \sum_{m = 1}^{\nmixture} \gamma_m \cN(z ; \mu_m, \Sigma_m)$
with weights $\gamma \in [0,1]^{\nmixture}$ with $\sum_{m=1}{\nmixture} = 1$, means $\mu_m \in \bR^{\dimZ}$, and covariances $\Sigma \in \bR^{\dimZ \times \dimZ}$ for $i \in [\nmixture]$.
Hence, the parameters are $\phi = (\gamma, \mu_1, \ldots, \mu_{\nmixture}, \Sigma_1, \ldots, \Sigma_{\nmixture})$.
With the score function $\nabla_{\phi}\log \pi_{\phi}$ known analytically for GMMs, \cref{eq:logderiv} can be estimated efficiently (see \cref{app:scorefunction} for details).

\section{Experiments}\label{sec:experiments}

\begin{figure}
\centering
    \includegraphics[width=0.48\textwidth]{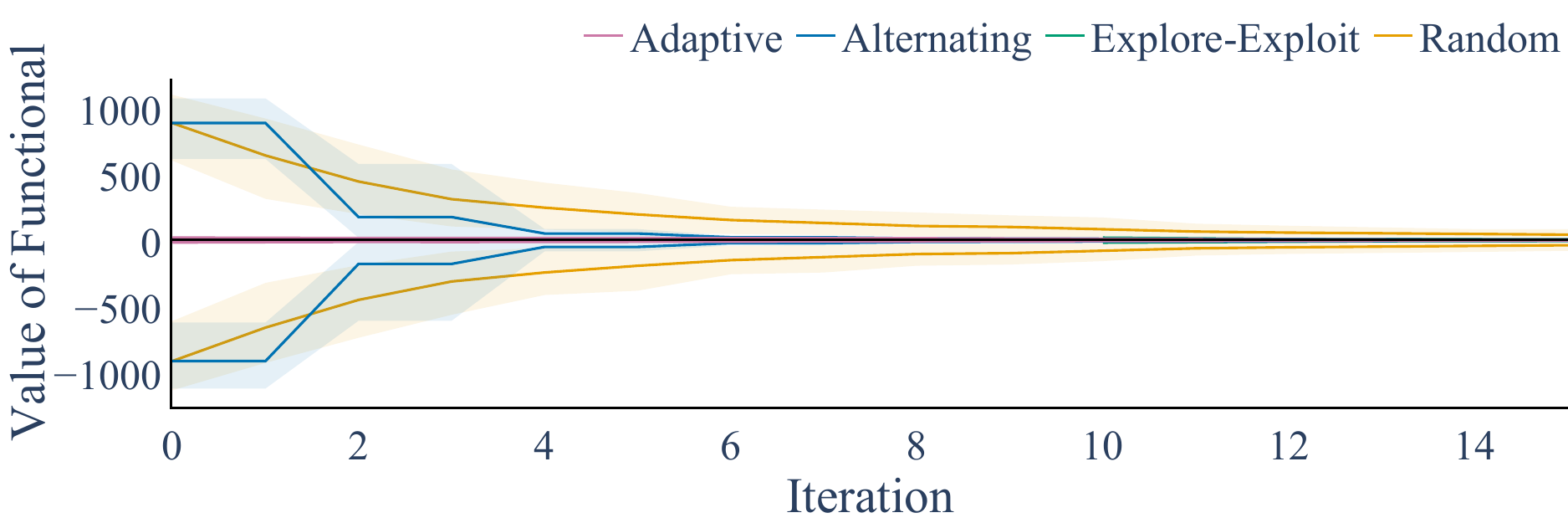}
    \hfill
    \includegraphics[width=0.48\textwidth]{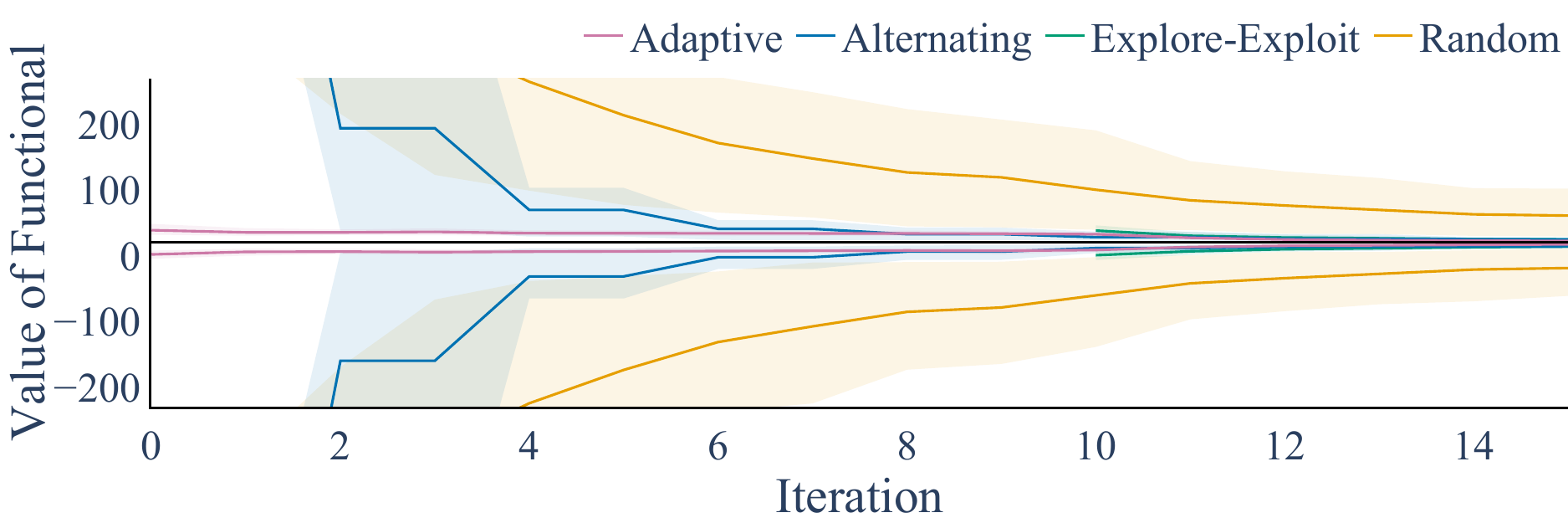}
    \includegraphics[width=0.48\textwidth]{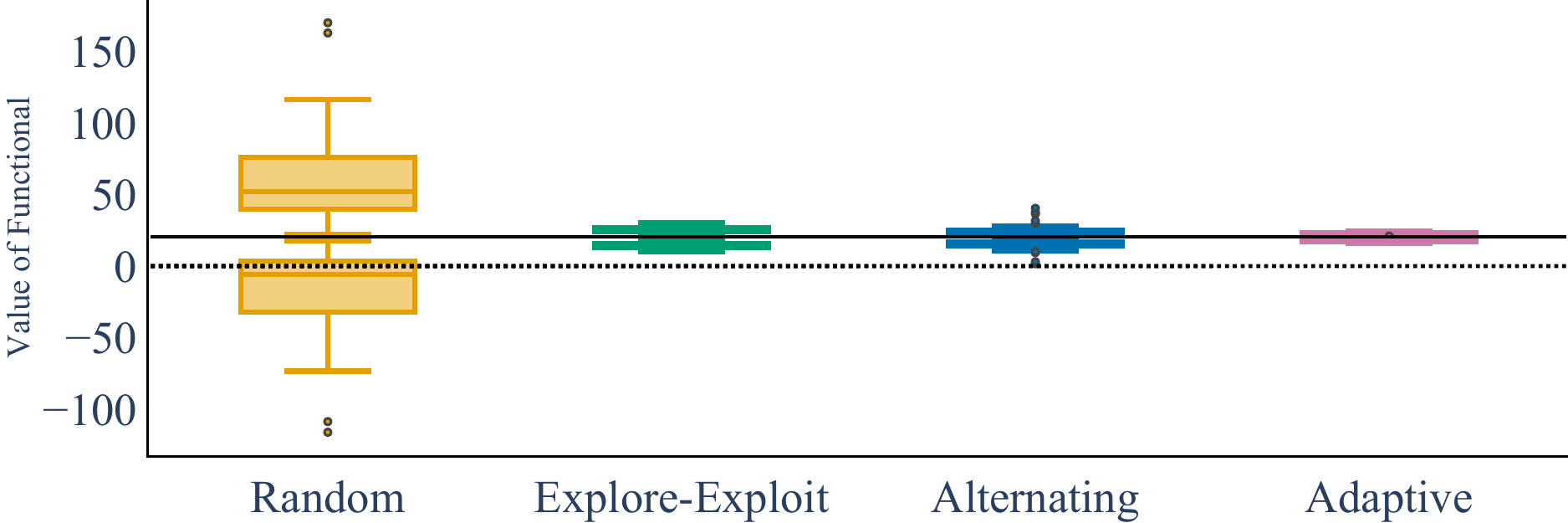}
    \hfill
    \includegraphics[width=0.48\textwidth]{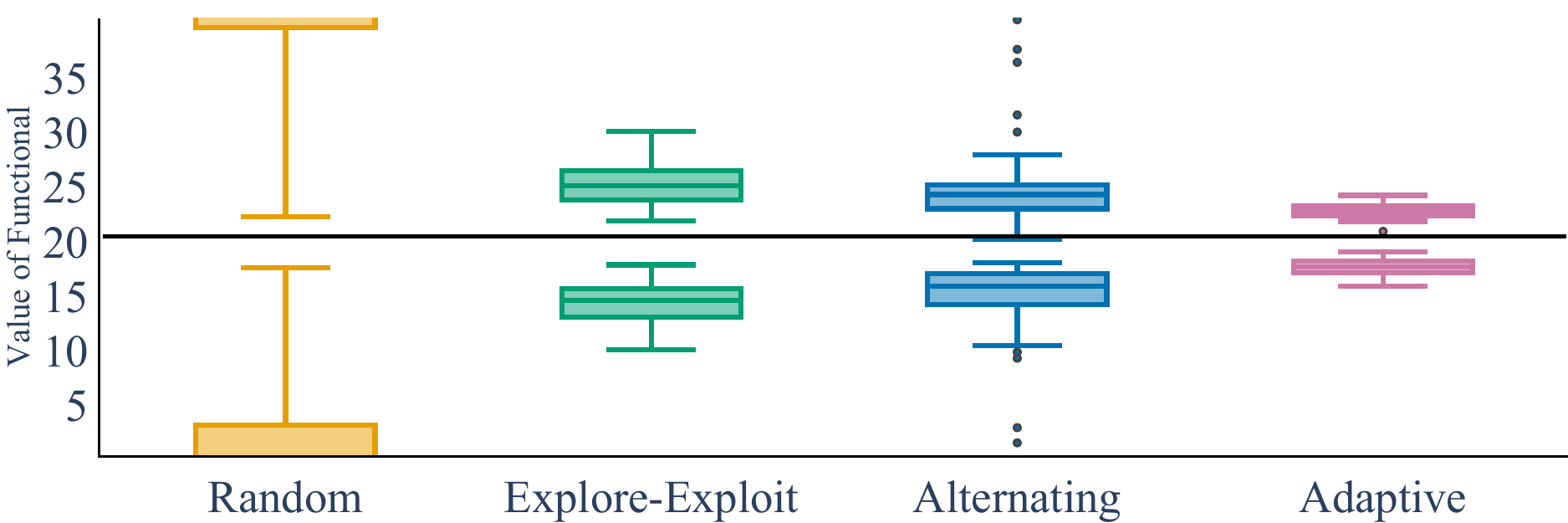}
    \caption{We compare the different strategies in our synthetic setting. \text{Left} and \textbf{right} only differ in the range of the y-axis. The black constant line represents the true value of $Q[f_0]$. \textbf{Top:} Estimated upper and lower bounds $Q^{\pm}_t$ over $t \in [T]$ for $n_{\mathrm{seeds}} = 50$ and two different `zoom levels' on the y-axis. Lines are means and shaded regions are $(10,90)$-percentiles. \textbf{Bottom:} The final estimated bounds $Q^{\pm}_T$ at $T=16$. The dotted line is $y=0$. Both locally guided heuristic (explore-then-exploit, alternating explore exploit) confidently bound the target query away from zero with a relatively narrow gap between them. Our targeted adaptive strategy is even better and essentially identifies the target query $Q[f_0]=20$ after $T=16$ rounds.}
    \label{fig:results}
\end{figure}

\xhdr{Setup.}
We consider a low-dimensional setting (for visualization purpose) with $\dimX = \dimZ = 2$ and
\begin{equation}\label{eq:setup}
    h_j(Z, U) = \alpha \cdot (\sin(Z_j) (1 + U))\:,  \quad f(X) = \beta \sum_{j=1}^\dimX \exp(X_j)  \sin(X_j) + U\:, \quad U \sim \cN(0,1)\:,
\end{equation}
where we use $\alpha = \beta = 20$.
The local point of interest is $x^* = 0 \in \bR^{\dimX}$ and we easily check that $\partial_i f(x^*) = \beta$.
We use $\nsample = 250$ samples in each round over a total of $T = 16$ experiments.

\xhdr{Method parameters.}
We use radial basis function (RBF) kernels $k(x_1, x_2) = \exp({\rho \|x_1 - x_2 \|^2})$ with a fixed $\rho = 1$. 
Note, that the three hyperparameters are relative weights. Thus we set $\lambda_s := \frac{\lamfirst \lamsecond}{\lamfeas} = 0.01$ and $\lamfeas = 0.04$, we refer to the Appendix~\ref{app:hyper} for a further comparison of different hyperparameter choices.
For all strategies the variance of our policy is set at $\sigma_e = 0.001$.
For explore then exploit, we chose $\nexplore = 10, \nexploit = 6$ with $\pi_t = \cN(\mu_t, \sigma_e \id_{\dimZ})$ and independent $\mu_t \sim \cN(\B{0}_{\dimZ}, \id_{\dimZ})$ for $t \in [\nexplore]$. 
The Gaussian mixture of the adaptive strategy is initialized with $M = 3, \gamma = (\nicefrac{1}{3}, \nicefrac{1}{3}, \nicefrac{1}{3}), \Sigma_m = \id_{\dimZ}$ and independent $\mu_m \sim \cN(\B{0}_{\dimZ}, \id_{\dimZ})$. We use a constant learning rate $\alpha_t = 0.01$ for all $t$ and restrict ourselves to learning only weights $\gamma_m$, means $\mu_m$ and the diagonal entries of the covariances $\Sigma_m$.

\xhdr{Results.}
We perform $n_{\mathrm{seeds}} = 50$ runs for different random seeds and report means with the $10$- and $90$-percentiles in \cref{fig:results}.
The simple non-adaptive random baseline starts out with poor performance as expected, and improves mildly over multiple rounds.
We note that the ultimate performance of the random baseline depends primarily on how much mass the random exploration (defined by the meta-distribution $\Gamma$) puts near $x^*$, i.e., whether we collect sufficiently many samples near $x^*$ over the $T$ rounds.
Explore then exploit performs quite well as soon as we start exploitation.
Again, whether sufficient informative examples have been collected during the exploration stage depends on the choice of $\Gamma$. However, explore then exploit does outperform random (with the same exploration distribution) indicating that the heuristic of focusing on the informative samples (the ones near $x^*$) does provide substantial improvements (as also found by \citep{singh23active}).
The alternating strategy AEE provides bounds across all rounds and clearly shows step-wise improvement from the first round on---ultimately performing similar to EE.
Finally, the adaptive strategy quickly narrows the bounds and achieves a fairly narrow gap $\Delta$ already after few rounds. At $T=16$ it has essentially identified the target query $Q[f_0] = 20$ with low variance across seeds.
For completeness, we provide a runtime comparison of the different methods in \cref{app:runtimes}.
The code to reproduce results is available at \url{https://github.com/EAiler/targeted-iv-experiments}.

\section{Discussion and Conclusion}\label{sec:conclusion}

\xhdr{Summary.}
We formalized designing optimal experiments to learn about a scientific query as sequential instrument design with the goal of minimizing the gap between estimated upper and lower bounds of a target functional $Q$ of the ground truth mechanism $f_0$.
We only assume indirect experimental access, allow for unobserved confounding, and consider nonlinear $f_0$ with multi-variate inputs such that both $f_0$ and $Q[f_0]$ may be unidentified.
For a broad set of queries, we derive closed form estimators for the bounds within each round of experimentation when $f$ lies within an RKHS.
Based on these estimates, we then develop adaptive strategies to sequentially narrow the bounds on the scientific query of interest and demonstrate the efficacy of our method in synthetic experiments.
Given the increased amount of data collected in fully or partially automated labs, for example for drug discovery, we believe that efficient, adaptive experiment design strategies will be a vital component of data-driven scientific discovery.

\xhdr{Limitations and future work.}
We have not validated our method in a real-world setting, as it would require access (and full control) over an actual experimental setup as well as the time and resources required to conduct these experiments.
A key technical limitation of our work is that neither our general policy learning formulation in \cref{eq:bounds,eq:policylearning} nor the concrete closed form estimates in \cref{thm:closed_form_ours} obtain provably valid bounds for all queries $Q$ from finite samples (see discussion right before \cref{sec:sequentialexps}).
While our approach is reliable in synthetic experiments, we believe thorough theoretical analysis of necessary and sufficient conditions for (asymptotically) valid bounds (including ideally asymptotic normality with known asymptotic covariance or even finite sample guarantees) is a useful direction for future work beyond the scope of our current manuscript.
In practice, the applicability of our approach may also limited by the assumptions that the underlying system is well described via a fixed, static function $f_0: \cX \to \cY$ as opposed to, say, a temporally evolving and interacting systems governed by a differential equation.
Similarly, while IV assumptions (1) and (2) can arguably be justified in our setting, the third assumption $Z \indep Y \given X, U$ limits the types of experimentation we can consider (see discussion at the end of \cref{sec:intro}).

Methodologically, we believe that our approach could benefit from advanced variance reduction techniques and be sped up by more efficient estimators \citep{chandak2024adaptive}. Along these lines, it is worthwhile future work to analyze the optimal trade-off between exploration and retaining a large number of samples for exploitation and thus lower variance estimates.
Finally, our current empirical validation is limited to linear local functionals, rather simple parametric choices for policies (such as (mixtures of) Gaussians), and we have not optimized the kernel choices and hyperparameters.
Hence, a validation where all these choices have been tailored to a real-world application scenario is an important direction for future work.

\begin{ack}
EA is supported by the Helmholtz Association under the joint research school ``Munich School for Data Science -- MUDS''.
This work has been supported by the Helmholtz Association’s Initiative and Networking Fund through the Helmholtz international lab ``CausalCellDynamics'' (grant \# Interlabs-0029).
\end{ack}

\bibliographystyle{icml2024}
\bibliography{bib}
\clearpage

\appendix

\section{Proofs}\label{app:proofs}

In this section we provide proofs of the statements in the main paper.
Before proving \cref{thm:closed_form_ours} we restate a result from the literature that provides closed form consistent estimates of the minimax problem with the $L_2$ penalty term (instead of penalizing the functional).

\begin{restatable}[Proposition 10 of \citep{dikkala2020minimax}]{theorem}{closedforminf}\label{thm:closed_form_inf}
    Assume without loss of generality that functions in $\cH_Z, \cH_X$ have bounded variation and their images are contained in $[-1, 1]$. Additionally, assume that for each $h \in \cH_Z$ also $-h \in \cH_Z$. Denote by $K_{XX}$ and $K_{ZZ}$ the empirical kernel matrices from $\cD$. Then we can consistently estimate
    \begin{equation}
        f = \arginf_{f \in \cF}\sup_{g \in \cG} \E[(Y - f(X))g(Z)] - \lamfirst \|g\|_{\cG} + \lamsecond \|f \|_{\cF}
    \end{equation}
    via
    \begin{equation}
        \hat{f}(\cdot) = \sum_{i=1}^n \hat{\theta_i} k(X_i, \cdot) \quad \quad \hat{\theta} = (K_{XX} M K_{XX} + 4 \lamfirst \lamsecond K_{XX})^+ K_{XX} M y 
    \end{equation}
    with $M = K_{ZZ}$. 
    If the function classes $\cF, \cG$ are already norm constrained, the estimator needs no penalization.
\end{restatable}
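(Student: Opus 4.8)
The plan is to recognize this as \citet{dikkala2020minimax}'s Proposition~10 and reconstruct its proof in three stages: a representer-theorem reduction to a finite-dimensional problem, an explicit solution of that problem yielding the stated pseudoinverse formula, and a consistency argument transferring the finite-sample minimizer to its population counterpart. First I would replace the population expectation $\E[(Y-f(X))g(Z)]$ by its empirical average over $\cD$, so the objective becomes a finite-sample minimax problem over $\cF=\cH_X$ and $\cG=\cH_Z$. Since the only $f$- and $g$-dependence enters through the evaluations $f(x_i),g(z_i)$ and the RKHS norms, the representer theorem applies to both players: the optimizers take the form $f(\cdot)=\sum_i \theta_i k_X(x_i,\cdot)$ and $g(\cdot)=\sum_j \beta_j k_Z(z_j,\cdot)$, with $\big(f(x_i)\big)_i = K_{XX}\theta$, $\big(g(z_j)\big)_j = K_{ZZ}\beta$, $\|f\|_{\cH_X}^2=\theta^\top K_{XX}\theta$, and $\|g\|_{\cH_Z}^2=\beta^\top K_{ZZ}\beta$. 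Writing $u(\theta):=y-K_{XX}\theta$ for the residual vector, the problem reduces to a finite-dimensional minimax in $(\theta,\beta)$.

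The inner maximization over $g$ is a linear functional of $g$ minus a norm penalty, which I would resolve using the dual-norm computation already recorded in \cref{thm:closed_form_supremum}: by positive homogeneity, $\sup_{g}\,\langle r_0-Tf,g\rangle_{\cH_Z}-\lamfirst\|g\|_{\cH_Z}$ is finite (and equal to $0$) exactly when the dual RKHS norm of the residual functional does not exceed $\lamfirst$, and this dual norm is consistently estimated by $\big(\tfrac{1}{\nsample^2}\,u(\theta)^\top K_{ZZ}\,u(\theta)\big)^{1/2}$. Eliminating $g$ thus converts the inner supremum into a quadratic term $\propto \tfrac{1}{\lamfirst}\,u(\theta)^\top K_{ZZ}\,u(\theta)$ (appearing directly in Fenchel form for a squared-norm penalty, or through a Lagrange multiplier when the penalty is the norm and the residual-norm constraint binds), so that the remaining outer problem is convex in $\theta$ and regularized by the $f$-penalty. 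Its stationarity/KKT conditions give the normal equations $\big(K_{XX}K_{ZZ}K_{XX}+4\lamfirst\lamsecond\,K_{XX}\big)\theta=K_{XX}K_{ZZ}\,y$, where the scalar $4\lamfirst\lamsecond$ multiplying $K_{XX}$ is precisely what results from combining the two penalty strengths through the inner dual step. Since the coefficient matrix may be rank-deficient (e.g.\ when $K_{ZZ}$ is and the regularization is weak), I would select the minimum-norm solution via the Moore–Penrose pseudoinverse, which coincides with the inverse in the full-rank case, obtaining exactly $\hat\theta=(K_{XX}K_{ZZ}K_{XX}+4\lamfirst\lamsecond K_{XX})^+K_{XX}K_{ZZ}y$.

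The hard part is the consistency claim, i.e.\ that $\hat f=f_{\hat\theta}$ converges to the population minimizer. Here I would (i) establish a uniform law of large numbers for the empirical objective over $\cF\times\cG$, for which the standing assumptions—bounded variation and images contained in $[-1,1]$—are exactly what bound the statistical complexity (e.g.\ through Rademacher-complexity or metric-entropy estimates of the RKHS balls), and (ii) upgrade uniform convergence of the objective to convergence of the argmin using convexity in $f$ and the stability of the penalized saddle point. The symmetry assumption that $-h\in\cH_Z$ whenever $h\in\cH_Z$ is what makes the inner support function equal to the two-sided dual norm, validating the elimination step above; and when $\cF,\cG$ are taken as norm balls the penalties merely re-encode those constraints, so they can be dropped, as stated. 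I would defer the quantitative rates and the precise mode of convergence to \citet{dikkala2020minimax}, whose argument this reconstruction mirrors.
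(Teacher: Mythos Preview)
Your proposal is correct and follows essentially the same route as the paper: reduce to the finite-dimensional problem via the representer theorem, solve the inner maximization over $g$ in closed form to obtain the quadratic $\tfrac{1}{4\lamfirst}\,u(\theta)^\top K_{ZZ}\,u(\theta)$, and then minimize the resulting convex quadratic in $\theta$ to get the pseudoinverse formula. The paper's proof is the bare algebra of those two steps; your added discussion of the representer theorem and of consistency (uniform LLN plus argmin stability) fills in what the paper leaves implicit. One small caveat: your handling of the inner max tries to cover both a norm and a squared-norm penalty, but note that the $1/(4\lamfirst)$ constant and the resulting $4\lamfirst\lamsecond$ coefficient only arise cleanly from the \emph{squared}-norm Fenchel dual $\sup_g\,\langle u,g\rangle-\lamfirst\|g\|^2=\|u\|^2/(4\lamfirst)$, which is exactly how the paper proceeds (despite the statement writing $\|g\|$); the unpenalized-norm case you sketch gives $0$ or $+\infty$ rather than a quadratic, so the Lagrange-multiplier detour is unnecessary here.
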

\begin{proof}

Note that we have the optimum of the inner supremum 
\begin{equation}
    \sup_{g \in \cG} \E[(Y - f(X))g(Z)] - \lamfirst \|g\|_{\cG} 
\end{equation}
at
\begin{equation}
    \frac{1}{4 \lamfirst} (Y - f(X))^\top M (Y - f(X))
\end{equation}
Therefore, we are left to solve the outer minimization of
\begin{equation}
    \hat{f} = \min_{f \in \cF} (Y - f(X))^\top M (Y - f(X)) + 4 \lamfirst \lamsecond \| f \|_{\cF}^2 
\end{equation}

\begin{align*}
    \hat{f} = &\min_{f \in \cF} (Y - f(X))^\top M (Y - f(X)) + 4 \lamfirst \lamsecond \| f \|_{\cF}^2 \\
    &\min_{\theta \in \bR^n} (y - K_{XX} \theta) ^\top M (y - K_{XX} \theta) + 4 \lamfirst \lamsecond \| K_x \theta \|_2^2   \quad &| \text{\tiny{Replacing the function with the (empirical) kernels}} \\
    &\min_{f \in \cF}   \theta^\top K_{XX} M K_{XX} \theta - 2y^\top MK_{XX} \theta + 4 \lamfirst \lamsecond \theta^\top K_x \theta \quad &| \text{\tiny{Writing out product}} \\
    &\min_{f \in \cF} \theta^\top (K_{XX} M K_{XX} + 4 \lamfirst \lamsecond K_{XX})\theta - 2y^\top MK_{XX} \theta \quad &|\text{\tiny{Aggregation of terms wrt $\theta$}} \\
    \hat{f} =& \quad (K_{XX} M K_{XX} + 4 \lamfirst \lamsecond K_{XX})^+ K_{XX} M y 
     \quad &| \text{\tiny{Solution for convex minimization problems}} \\
\end{align*}
\end{proof}

\closedformours*

\begin{proof}
The solution of the inner supremum remains the same as in \cref{thm:closed_form_inf}.
This leaves us to solve the outer minimization
\begin{align}
    &\min_{f \in \cH_X} Q[f] + 4 \lamfirst \lamsecond \| f \|_{\cH_X} + \lamfeas (y - K_{XX} \theta) ^\top M (y - K_{XX} \theta) \:, 
\end{align}
where $M = K_{ZZ}$.
We write out $f$ as a linear combination of kernel basis functions (at the observed data points) and use the linearity of the gradient functional, which applied to the kernel basis functions is just $(\partial_i K_{X \cdot})(x^*)$.
This yields for the overall functional $Q[f_\theta] =   (\partial_i K_{X \cdot})(x^*) \theta$. 
We are thus seeking
\begin{equation}    
    \min_{\theta \in \bR^n} (\partial_i K_{X \cdot})(x^*) \theta + \lamfeas \theta^\top K_{XX} M K_{XX} \theta - 2 \lamfeas y^\top MK_{XX} \theta + 4 \lamfirst \lamsecond \theta^\top K_{XX} \theta\:,
\end{equation}
which we rewrite as
\begin{equation}
    \min_{\theta \in \bR^n} \theta^\top \left(K_{XX} M K_{XX} + 4 \frac{\lamfirst \lamsecond}{\lamfeas} K_{XX}\right)\theta + \Big(\frac{1}{\lamfeas}(\partial_i K_{X \cdot})(x^*)- 2y^\top MK_{XX} \Big)\theta  \:.
\end{equation}
Building on the arguments in the proof of \cref{thm:closed_form_inf} where we replace the linear part in $\theta$ by the partial derivative functional, we obtain the minimizer
\begin{equation}
    \left(K_{XX} M K_{XX} + 4 \frac{\lamfirst \lamsecond}{\lamfeas} K_{XX}\right)^+ \left(K_{XX} M y - \frac{1}{\lamfeas}(\partial_i K_{X \cdot})(x^*)\right) 
\end{equation}
as a consistent estimator.
\end{proof}

\begin{proof}
    We re-iterate the same arguments to prove the theorem for general bounded linear functionals.
    We assume $Q[f]$ to be a linear bounded functional acting on $f$.
    \begin{theorem} \xhdr{Riesz representation theorem}
        Let $Q[f]: \cH_X \to \bR$ be a continuous linear functional on a separable Hilbert space $\cH_X$. Then there exists a $q \in \cH_X$ such that $Q[f] = \langle f, q \rangle_{\cH_X}, \, \forall f \in \cH_X$.
    \end{theorem}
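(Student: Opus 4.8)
The plan is to prove the Riesz representation theorem by the classical orthogonal-complement construction: I would build the representer $q$ explicitly from a nonzero vector in the orthogonal complement of $\ker Q$ and then check uniqueness. The only genuinely analytic ingredient is the orthogonal projection (decomposition) theorem for closed subspaces, which is where completeness of $\cH_X$ is used; everything else is linear algebra. First I would dispose of the trivial case $Q \equiv 0$ by taking $q = 0$, so assume $Q \not\equiv 0$. Since $Q$ is continuous and linear, its kernel $N := \{f \in \cH_X : Q[f] = 0\}$ is a closed linear subspace, and it is \emph{proper} because $Q \not\equiv 0$.

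Next I would invoke the projection theorem to write $\cH_X = N \oplus N^{\perp}$; since $N \neq \cH_X$, I can pick a nonzero $z \in N^{\perp}$, and necessarily $Q[z] \neq 0$ (otherwise $z \in N \cap N^{\perp} = \{0\}$). For an arbitrary $f \in \cH_X$, the element $u := f - \tfrac{Q[f]}{Q[z]}\, z$ satisfies $Q[u] = Q[f] - \tfrac{Q[f]}{Q[z]} Q[z] = 0$, so $u \in N$ and hence $\langle u, z\rangle_{\cH_X} = 0$. Expanding this inner product gives $\langle f, z\rangle_{\cH_X} = \tfrac{Q[f]}{Q[z]}\, \|z\|_{\cH_X}^2$, which rearranges into $Q[f] = \langle f, q\rangle_{\cH_X}$ with $q := \tfrac{Q[z]}{\|z\|_{\cH_X}^2}\, z$ (no conjugation is needed since $\cH_X$ is real). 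Uniqueness follows immediately: if $\langle f, q_1\rangle_{\cH_X} = \langle f, q_2\rangle_{\cH_X}$ for all $f$, testing against $f = q_1 - q_2$ forces $\|q_1 - q_2\|_{\cH_X}^2 = 0$.

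I expect the orthogonal projection theorem to be the only nontrivial step, and it in turn rests on completeness of $\cH_X$ together with the existence of nearest points in closed convex sets; the remainder is routine. To connect back to the paper's use, once $Q[f] = \langle f, q\rangle_{\cH_X}$ is available, linearity yields $Q[f_{\theta}] = \sum_{i=1}^{\nsample} \theta_i\, Q[k_X(x_i,\cdot)]$, i.e. $Q$ acts on the coefficient vector $\theta$ through the fixed vector $(Q[K_{X \cdot}])(x^*)$. Substituting this linear-in-$\theta$ term (with the appropriate sign $\mp$ for $f^{\pm}$) into the outer minimization reduces the general bounded-linear-functional case to exactly the quadratic minimization already carried out for the partial derivative, giving the stated closed form in \cref{eq:regularizedsolution}.
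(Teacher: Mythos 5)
Your proof is correct, and it is worth noting that the paper itself never proves this statement: the Riesz representation theorem is invoked as a classical black box inside the proof of the general linear-functional case of \cref{thm:closed_form_ours}, so your orthogonal-complement construction fills in a cited ingredient rather than paralleling an existing argument. Every step of your argument checks out for a real Hilbert space: $\ker Q$ is closed by continuity and proper by nontriviality, the projection theorem supplies a nonzero $z \in (\ker Q)^{\perp}$ with $Q[z] \neq 0$, the element $u = f - \tfrac{Q[f]}{Q[z]}z$ lies in $\ker Q$, and expanding $\langle u, z\rangle_{\cH_X} = 0$ yields the representer $q = \tfrac{Q[z]}{\|z\|_{\cH_X}^2} z$; uniqueness via $f = q_1 - q_2$ is routine (and is a small bonus, since the statement only asserts existence). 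Two remarks. First, you correctly identify that completeness (through the projection theorem) is the only structural input; the separability hypothesis in the paper's statement is superfluous and your proof silently demonstrates this. Second, your closing paragraph reconnecting the representer to the coefficient-level linear form is exactly how the paper uses the theorem: by the reproducing property, $q(x_i) = \langle q, k_X(x_i,\cdot)\rangle_{\cH_X} = Q[k_X(x_i,\cdot)]$, so $Q[f_{\theta}] = \sum_{i=1}^{\nsample}\theta_i\, Q[k_X(x_i,\cdot)]$ is linear in $\theta$ through the fixed vector $(Q[K_{X\cdot}])(x^*)$, and substituting this into the outer minimization reduces the general case to the quadratic problem already solved for the partial-derivative functional, recovering \cref{eq:regularizedsolution}. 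Your reduction is faithful to the paper's, and arguably cleaner, since the paper's own wording of this step (writing $q(\cdot) = \sum_{i=1}^{n} Q[k(x_i,\cdot)]$) conflates the evaluations $q(x_i)$ with an expansion of $q$ itself, whereas your version only ever uses the evaluations, which is all the argument requires.
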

    Assuming $f \in \cH_X$, via the Riesz-representer theorem for linear functionals, there exists a unique element $q \in \cH_X$ such that for any $f \in \cH_X$.
    The bounded linear functionals on $\cH_X$ form themselves a hilbert space called the dual space.
    \begin{equation}
        Q[f] = \langle f, q \rangle_{\cH_X}
    \end{equation}

    We assume $\cH_X$ to be an RKHS. %
    Therefore, we can write out $f(x)$ as a linear combination of the kernel basis functions (at the observed data points):

We replace $Q[f] = \langle f, q \rangle_{\cH_X} = \langle \sum_{i=1}^n \theta_i  k(x_i, \cdot), q \rangle = \sum_{i=1}^n \theta_i  q(x_i)$ as $q \in \cH_X$.
Due to the RKHS, we can write the empirical version as $q(\cdot) = \sum_{i=1}^n Q[k(x_i, \cdot)]$.

\begin{align}
    &\min_{f \in \cH_X} \theta^\top  Q[k(x_i, \cdot)] + 4 \lamfirst \lamsecond \theta^\top K_{XX} \theta + \lamfeas (y - K_{XX} \theta) ^\top M (y - K_{XX} \theta)
\end{align}

Thus we obtain the minimizer
\begin{equation}
    \left(K_{XX} M K_{XX} + 4 \frac{\lamfirst \lamsecond}{\lamfeas} K_{XX}\right)^+ \left(K_{XX} M y - \frac{1}{\lamfeas} Q[K_{X \cdot}(x^*)] \right) 
\end{equation}

\end{proof}

\section{Details on Adaptive Policies}\label{app:scorefunction}

For completeness, we here recall the score function of Gaussian mixture models, i.e., the gradients with respect to the mixture weights, means, and covariances of the log-likelihood function. We write $\pi$ for the GMM for brevity. First, we define the `responsibilities' $\zeta_{im}$, the probability that a sample $z_i$ comes from component $m$ via (see, e.g., \citet{petersen2008matrix})
\begin{equation}
    \zeta_{im} = \frac{\gamma_m \cN(z_i \given \mu_m, \Sigma_m)}{\sum_{j=1}^{\nmixture}\gamma_j \cN(z_i \given \mu_j, \Sigma_j)}\:.
\end{equation}
Then, we have for a sample $z_i$ and mixture component $m \in [\nmixture]$ that
\begin{align}
    \nabla_{\gamma_m} \log \pi(z_i) &= \frac{\zeta_{im}}{\gamma_m}\:, \\
    \nabla_{\mu_m} \log \pi(z_i) &= \zeta_{im} \Sigma_m^{-1} (z_i - \mu_m)\:, \\
    \nabla_{\Sigma_m} \log \pi(z_i) &= -\frac{1}{2} \zeta_{im} (\Sigma_m^{-1} - \Sigma_m^{-1} (z_i - \mu_m) (z_i - \mu_m)^T \Sigma_m^{-1})\:.
\end{align}

\section{Hyperparameter Tuning}\label{app:hyper}

In Eq.~\eqref{eq:closedformdelta} we end up with three hyperparameters. Note however, that those three mentioned hyperparameters are relative weights. Thus, we can set $\lambda_s := \frac{\lambda_f  \lambda_g}{ \lambda_c}$ and only tune $\lambda_s$ and $\lambda_c$. Intuitively, $\lambda_s$ regularizes the smoothness of the function spaces and $\lambda_c$ weighs the functional $Q$ relative to the moment conditions. In Fig.~\ref{fig:hyperparameter}, we show the dependence of our bounds on these hyperparameters: very low values of $\lambda_s$  lead to conservative bounds whereas large values of $\lambda_s$ yield narrow bounds throughout, as $\lambda_s$ effectively controls the size of the search space via regularity restrictions. 
For the learning rate, a lot of learning rates below a certain value actually work for the method. More gradient update steps (and therefore more experiments) are required for very small learning rates, very high learning rates result in drastic updates of the policy; something that is not very favorable in real-world experiments. 

\begin{figure}
    \centering
    \includegraphics[width=\textwidth]{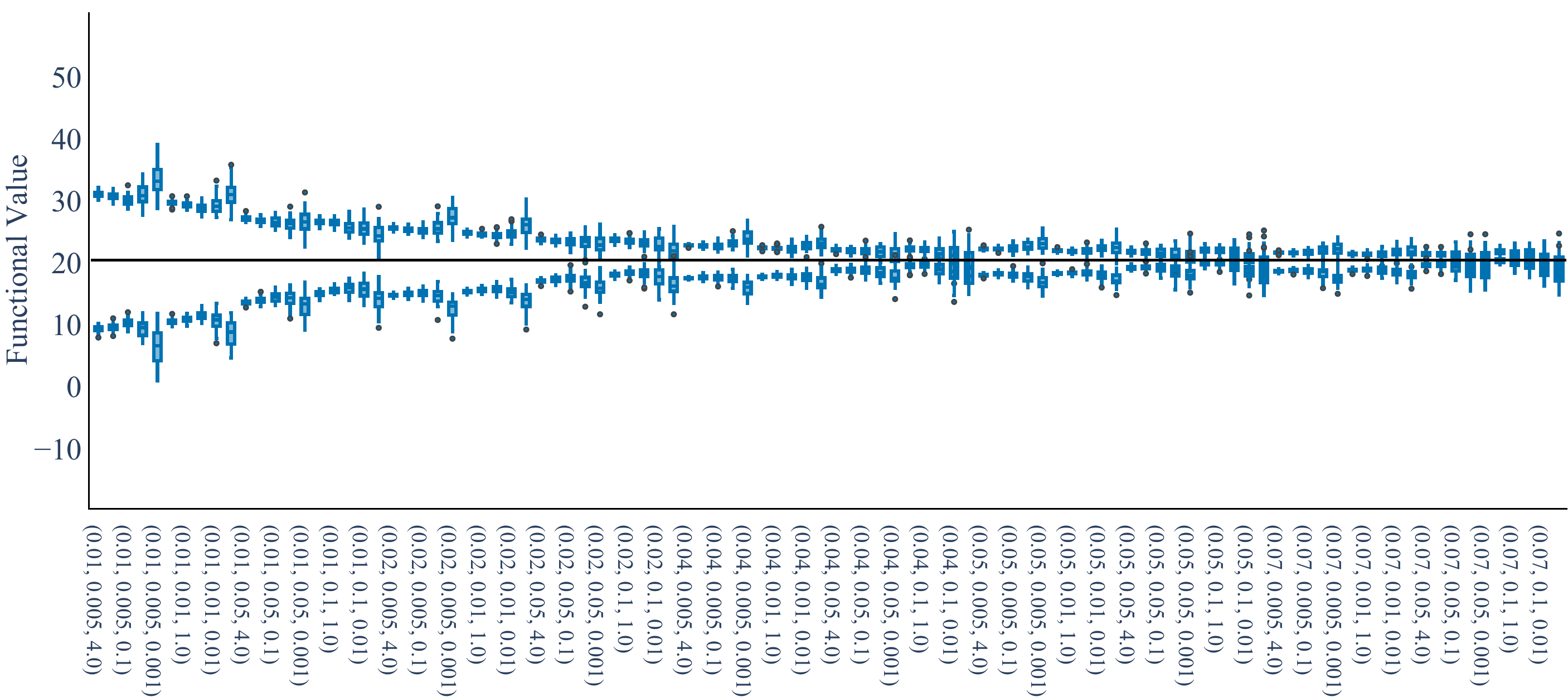}
    \caption{Adaptive Method for different hyperparameter settings. The x-axis shows them in the following order: $(\lambda_c, \lambda_s, \alpha_t)$}
    \label{fig:hyperparameter}
\end{figure}

\section{Runtimes}\label{app:runtimes}

For completeness, we show wallclock runtimes of the different methods in our empirical evaluation in \cref{fig:time}.
All methods were run on a MacBook Pro with Intel CPU.
The runtimes are per iteration and clearly increase for methods that accumulate data over previous rounds.
Even the most expensive passive baseline is relatively fast to compute without specialized hardware and we generally imagine computational cost to be negligible compared to the actual cost of physical experimentation required in each round in most applications.

\begin{figure}
    \centering
    \includegraphics[width=\textwidth]{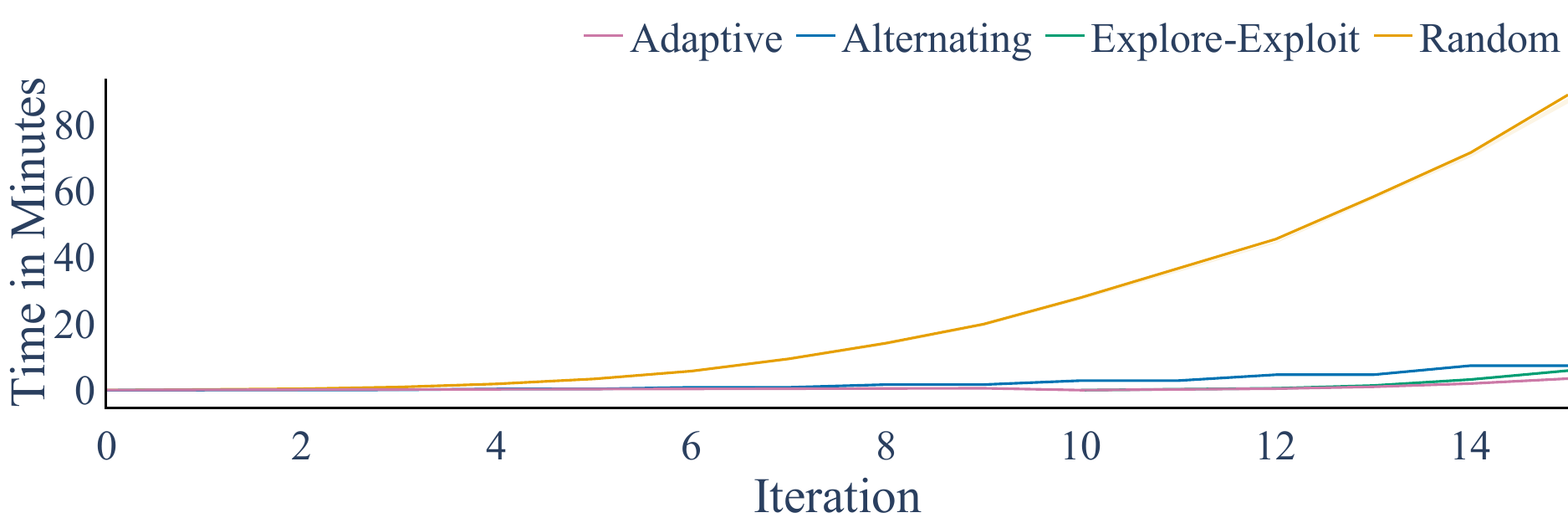}
    \caption{Wallclock runtimes of the different methods.}
    \label{fig:time}
\end{figure}

\section{Additional Experiments}

We assume the same setting as the low-dimensional one, only increasing the dimensions. 
\begin{align}
    h_j(Z, U) &= 
\begin{cases} 
    \alpha \cdot \sin(Z_j) \cdot (1 + U), & \text{if } j \leq \dimZ, \\
    1 + U, & \text{if } j > \dimZ.
\end{cases} \\ 
    f(X) &= \beta \sum_{j=1}^\dimX \exp(X_j)  \sin(X_j) + U\:, \quad U \sim \cN(0,1)\:,
\end{align}
with $\dimZ = 5, \dimX = 20$ and $\dimZ = 20, \dimX = 20$.
The setting guarantees, that though we have a mismatch in dimensions, i.e. $\dimZ < \dimX$, we would---in principle---still be able to identify the full causal effect, as the completeness property is fulfilled. We refer to Fig.~\ref{appfig:additional_results} for the results of the different methods.
Moreover, we note that the increase in time is still manageable, c.f. Fig~\ref{appfig:time}. The main driver for the computation time are the samples within each experiment ($\nexplore, \nexploit$) due to the kernel approach and, therefore, the inversion of gram matrices based on the sample size. 

For $\dimZ = 5, \dimX = 20$, we used $\lambda_s := \frac{\lamfirst \lamsecond}{\lamfeas} = 0.04$ and $\lamfeas = 0.1$.
For $\dimZ = \dimX = 20$, we used $\lambda_s := \frac{\lamfirst \lamsecond}{\lamfeas} = 0.05$ and $\lamfeas = 0.1$.
Intuitively, the optimization requires stronger hyperparameters in higher dimensional settings as with dimensions, also the function space increases.

\begin{figure}
\centering
    \includegraphics[width=0.48\textwidth]{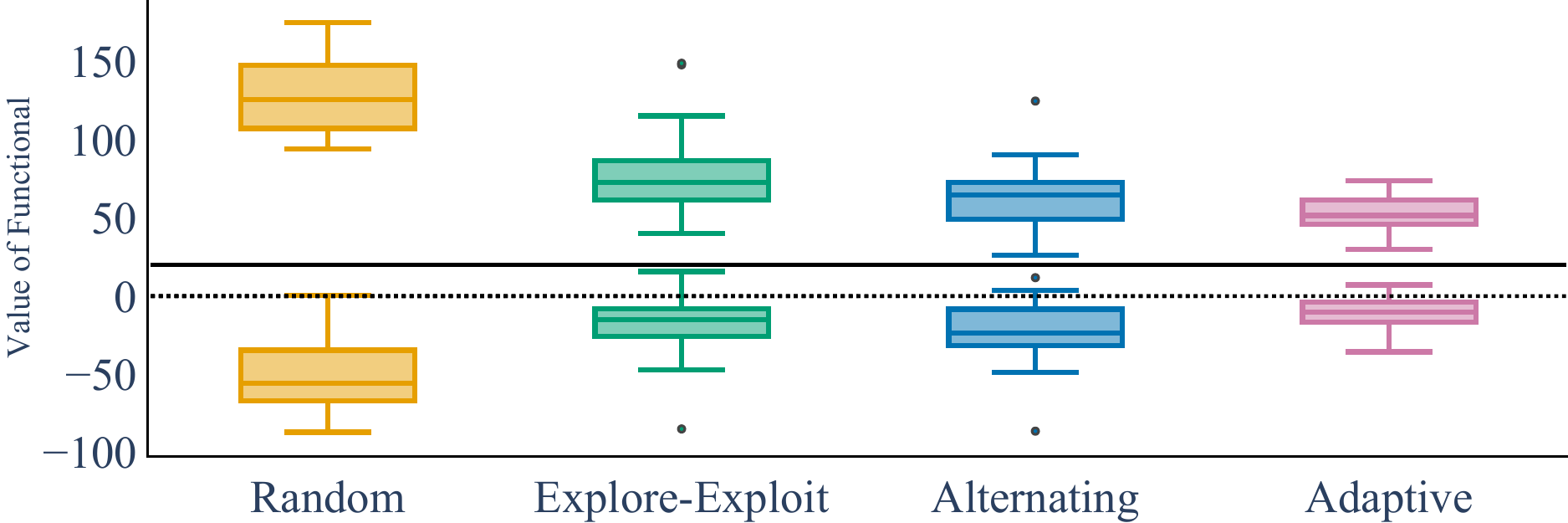}
    \hfill
    \includegraphics[width=0.48\textwidth]{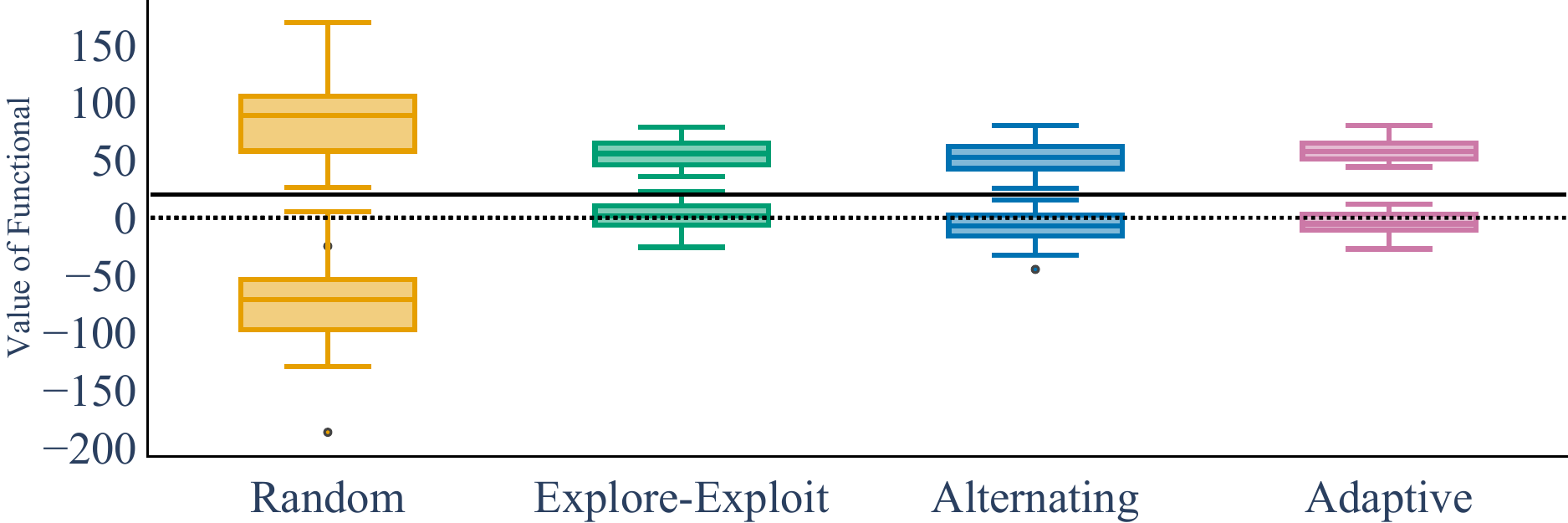}
    \caption{We compare the different strategies in our synthetic setting. The black constant line represents the true value of $Q[f_0]$. Both plots show the estimated upper and lower bounds $Q^{\pm}_t$ at $T = 16$ for $n_{\mathrm{seeds}} = 50$. Lines are means and shaded regions are $(10,90)$-percentiles. \textbf{Left:} $\dimZ = 5, \dimX = 20$,  \textbf{Right:} $\dimZ = \dimX = 20$.}
    \label{appfig:additional_results}
\end{figure}
\begin{figure}
\centering
    \includegraphics[width=0.48\textwidth]{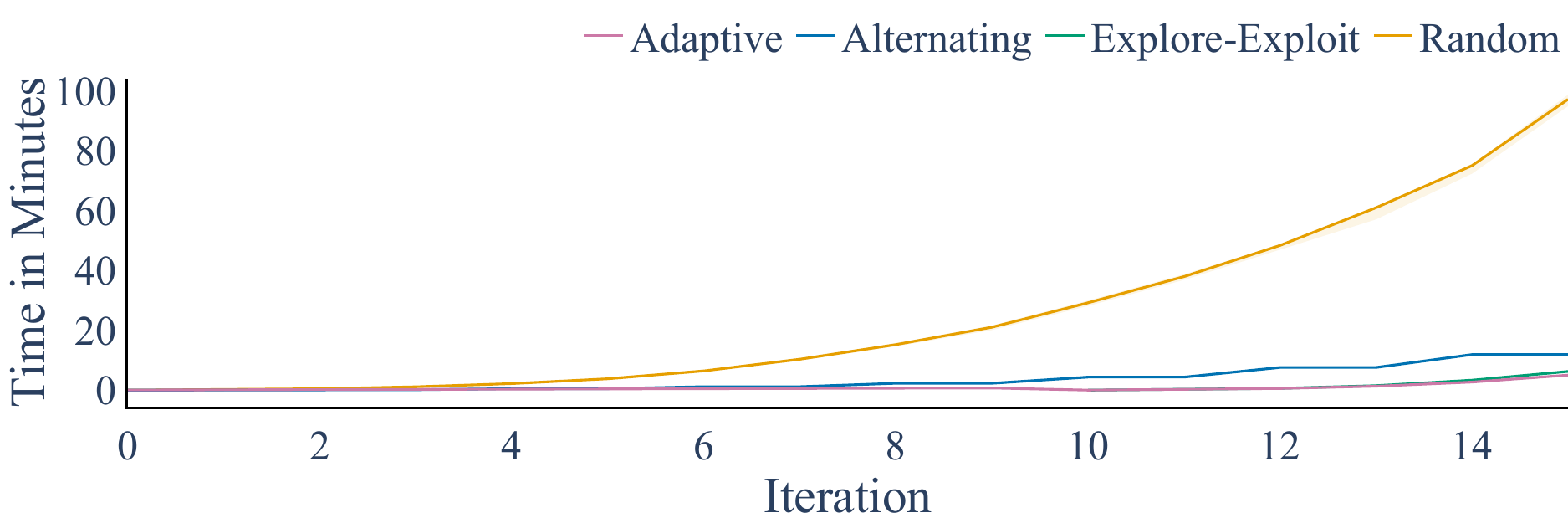}
    \hfill
    \includegraphics[width=0.48\textwidth]{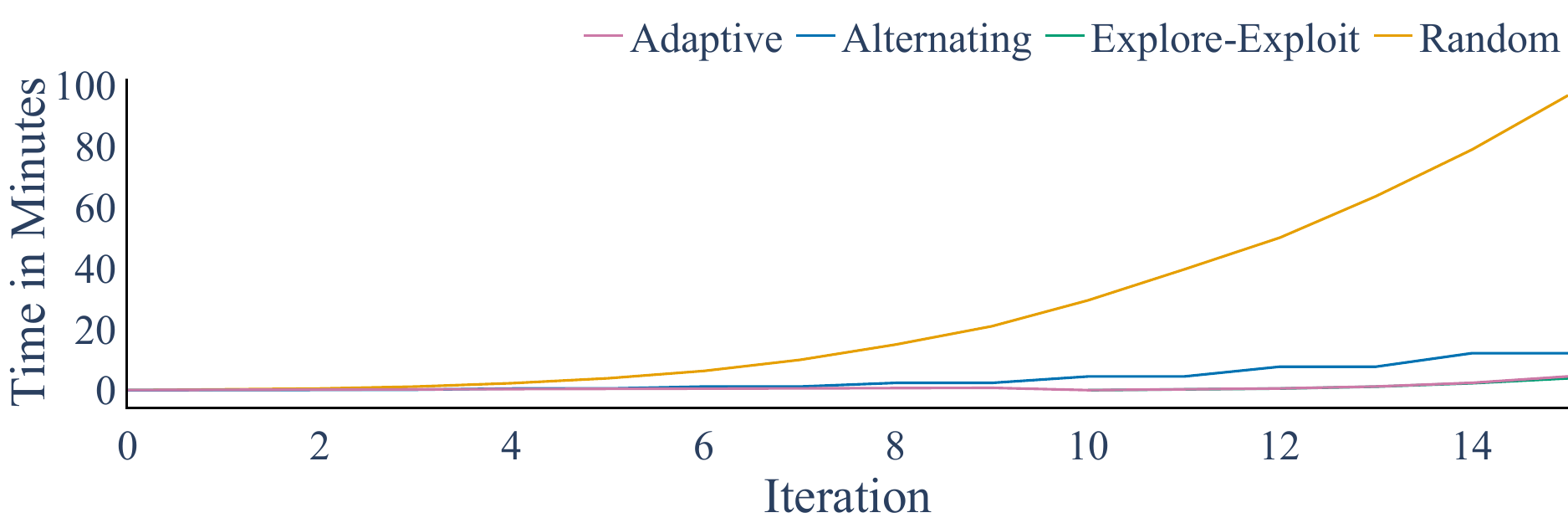}
    \caption{We compare the wallclock time in a higher dimensional setting. The time increase is still mild. The main driver is the number of samples in each experiment, instead of the dimensionality itself. \textbf{Left:} $\dimZ = 5, \dimX = 20$,  \textbf{Right:} $\dimZ = \dimX = 20$.}
    \label{appfig:time}
\end{figure}

\section{Underspecification}

Additionally, we include some examples along the lines of underspecification. 
We refer to \cite{ailer2023underspecified} for a thorough discussion on underspecification, but give some motivation for the examples in the following:

Since $f$ is the solution to a linear inverse problem, the problem is often ill-posed and thus $f$ is not identified. 
However, underspecification means something more explicit.
It can be seen as a violation of the completeness property.
\begin{lemma} [\citet{severini2006identification}]  The conditional distribution of $p(X \given Z)$ is complete if and only if for each function $f(x)$ such that $\E[f(x)] = 0$ and $\text{Var}(f(x)) > 0$, there exists a function $g(z)$ such that $f(x)$ and $g(z)$ are correlated.
\end{lemma}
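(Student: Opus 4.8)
The plan is to connect the measure-theoretic definition of completeness to the correlation-based characterization in the statement through the law of iterated expectations. Recall that $p(X \given Z)$ being \emph{complete} means: for every admissible function $f$ of $X$, if $\E[f(X) \given Z] = 0$ almost surely, then $f(X) = 0$ almost surely. I would first note that the statement restricts attention to $f$ with $\E[f(X)] = 0$ and $\mathrm{Var}(f(X)) > 0$; this loses no generality because covariance is invariant under adding constants (so only the centered part of $f$ is ever detectable through correlation), and a mean-zero function with zero variance is already zero almost surely. Hence the task reduces to showing, for each centered and non-degenerate $f$, the equivalence between ``$\E[f(X)\given Z]$ is not almost surely zero'' and ``there exists $g$ with $\mathrm{Cov}(f(X), g(Z)) \neq 0$''.

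The key identity is that, writing $m(Z) := \E[f(X) \given Z]$ and using $\E[f(X)] = 0$,
\begin{equation}
    \mathrm{Cov}(f(X), g(Z)) = \E[f(X) g(Z)] = \E\big[g(Z)\, \E[f(X) \given Z]\big] = \E[g(Z)\, m(Z)]\:,
\end{equation}
which follows from the tower property. With this, both directions are short. If $m \equiv 0$ almost surely, then $\E[g(Z) m(Z)] = 0$ for every $g$, so no witness can be correlated with $f$. Conversely, if $m$ is not almost surely zero, I would exhibit an explicit witness by taking $g := m$ (the $L_2(Z)$-projection of $f(X)$ onto functions of $Z$), giving $\mathrm{Cov}(f(X), m(Z)) = \E[m(Z)^2] = \|m\|_{L_2(Z)}^2 > 0$. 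Chaining these with the definition yields: completeness $\iff$ [every centered non-degenerate $f$ has $m \not\equiv 0$] $\iff$ [every such $f$ is correlated with some $g$], which is exactly the claim.

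The main obstacle is purely the careful choice of admissible function classes for $f$ and $g$ and ensuring the witness $g = m$ is itself admissible. Depending on whether one uses $L_2$-completeness or bounded completeness, one must check that $m = \E[f \given Z]$ lies in the relevant space: for the $L_2$ formulation this is immediate, since conditional expectation is the orthogonal projection onto $L_2(Z) \subset L_2$ and hence norm-nonincreasing, but for bounded completeness one would instead take a truncated witness $g = \mathrm{sign}(m)\cdot \mathbf{1}\{|m| \ge \epsilon\}$ and let $\epsilon \to 0$ to retain nonzero covariance. I would fix the $L_2$ convention to keep the projection argument clean, and remark that the equivalence is ultimately an elementary consequence of completeness being precisely the injectivity of the conditional-expectation operator $f \mapsto \E[f \given Z]$, whose failure is detected exactly by correlation with a non-trivial function in its kernel.
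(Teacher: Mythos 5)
The paper itself offers no proof of this lemma---it is imported, with citation, from \citet{severini2006identification}, so there is no in-paper argument to compare against. Your proof is correct and is the standard one: the tower-property identity $\mathrm{Cov}(f(X),g(Z)) = \E\big[g(Z)\,\E[f(X)\given Z]\big]$ for centered $f$, combined with the self-witness $g = m := \E[f(X)\given Z]$, reduces the equivalence to the observation that completeness is precisely injectivity of the conditional-expectation operator on centered functions. Your treatment of the admissibility caveat is also sound: in the $L_2$ convention the witness $m$ lies in $L_2(Z)$ because conditional expectation is a contraction, and for bounded test functions either the truncated sign witness or $m$ itself (which inherits the bound on $f$) serves as a valid correlated function.
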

In most papers, the completeness property is trivially fulfilled by assuming an exponential family for $p(x \given z)$ with non-zero variance for all components. This means that both $T$ and $T^*$ are injective which again guarantees the identifiability of $f$.

Now, we denote $\cP_\cF$ as the orthogonal projection onto the function space $\cF$ and $\cN(T)$ as the null space of the linear operator $T: \cH_X \to \cH_Z$ with $Tf = T[f(X) \given Z]$.
Following this notation, \citet{severini2012efficiencybounds} argue that $\cP_{\cN(T)^\perp}f$ is the identifiable part of $f$. 
This leads to the following lemma:
\begin{lemma}[\cite{severini2012efficiencybounds}]
    $\E[Q[f]]$ is identified if and only if $Q \in \cN(T)^\perp$.
\end{lemma}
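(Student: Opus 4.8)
The plan is to characterize the set of mechanisms consistent with the observed distribution and then reduce identification of the linear functional to a single orthogonality condition via the Riesz representation. First I would make precise what ``identified'' means here: fixing the observable distribution, the conditional moment restriction $\E[Y - f(X) \given Z] = 0$ reads $Tf = r_0$ with $r_0 = \E[Y \given Z]$ and $T f = \E[f(X) \given Z]$. Since the true mechanism satisfies $T f_0 = r_0$, any other compatible mechanism $f$ obeys $T(f - f_0) = 0$, i.e.\ $f - f_0 \in \cN(T)$. Hence the set of mechanisms consistent with the data is exactly the affine fiber $f_0 + \cN(T)$, and the value $Q[f]$ is identified precisely when it is constant across this fiber, i.e.\ when $Q[f_0 + \eta] = Q[f_0]$ for every $\eta \in \cN(T)$.

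Second, I would exploit linearity of $Q$. By linearity $Q[f_0 + \eta] = Q[f_0] + Q[\eta]$, so constancy on the identified set is equivalent to requiring $Q[\eta] = 0$ for all $\eta \in \cN(T)$. Now I invoke the Riesz representation theorem already stated in the excerpt: since $Q$ is a bounded linear functional on the Hilbert space $\cH_X$, there is a unique $q \in \cH_X$ with $Q[f] = \langle f, q \rangle_{\cH_X}$. Identifying $Q$ with its representer $q$, the condition $Q[\eta] = \langle \eta, q \rangle_{\cH_X} = 0$ for all $\eta \in \cN(T)$ is, by definition of the orthogonal complement, exactly $q \in \cN(T)^\perp$. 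This settles both directions simultaneously.

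For completeness I would record the standard identity $\cN(T)^\perp = \overline{\mathrm{range}(T^*)}$, which links the orthogonality condition to the familiar requirement that the Riesz representer lie in the closure of the range of the adjoint and connects directly to the discussion of $\cP_{\cN(T)^\perp} f$ as the identifiable part of $f$. Because $\cN(T)$ is the kernel of the bounded operator $T$ it is closed, so the passage between $\cN(T)$ and its closure raises no subtleties, and the continuity of $Q$ guarantees that $Q$ vanishing on $\cN(T)$ extends consistently. The expectation in $\E[Q[f]]$ is bookkeeping indicating that identification is taken with respect to the induced distribution; since $Q$ is a fixed deterministic functional, it plays no role beyond fixing the reference distribution against which $T$ and $r_0$ are defined.

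The main obstacle I anticipate is not the algebra but pinning down the correct notion of identification and the exact shape of the identified set. In particular one must argue that the set of observationally equivalent mechanisms is genuinely the full affine fiber $f_0 + \cN(T)$ and nothing larger; this relies on realizability ($f_0 \in \cH_X$, so a solution of $Tf = r_0$ exists) and on $T$ being precisely the conditional-expectation operator whose kernel captures exactly the perturbations invisible to the data. Once this identification of the feasible set is granted, the reduction to orthogonality against $\cN(T)$ and the appeal to Riesz are immediate.
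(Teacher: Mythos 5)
Your proof is correct, and it is essentially the intended argument: the paper itself does not prove this lemma but cites it from \citet{severini2012efficiencybounds}, whose reasoning---the data pin down $f$ only up to the affine fiber $f_0 + \cN(T)$, so a bounded linear functional is identified exactly when its Riesz representer is orthogonal to $\cN(T)$, i.e., when it depends only on the identifiable part $\cP_{\cN(T)^\perp} f$ mentioned just above the lemma---is precisely what you reconstruct. Your flagged caveats (realizability, the hypothesis class containing the full fiber $f_0 + \cN(T)$, and the implicit restriction to bounded linear $Q$, without which the condition $Q \in \cN(T)^\perp$ is not even well-formed) are exactly the right ones and match the standing assumptions of the surrounding text.
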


\xhdr{Examples.} Let us introduce an example for an unidentified $Q[f]$: Consider $Z, Y \in \mathbb{R}$, $X \in \mathbb{R}^2$ and $h(z) = (z, 0)$, $f(x_1, x_2) = 0.5  x_1 + 2  x_2$ and $Q[f] = \partial_2 f (x^*)$. In this fully linear setting, due to the structure of $h$ the instrument can only ever perturb the first component of $X$ regardless of what policy we choose. Hence, it is impossible to identify e.g. $Q[f]$ w.r.t. the second argument with $\partial_2 f (x^*) = 2 \forall x^*$. Even a policy with full support will not (even partially) identify $Q[f_0]$.
In contrast, let us discuss a fully identifiable scenario, which is identifiable for all $Q[f]$, but still depends on the policy: Consider a similar setting with $f(x_1, x_2) =  x_1^2 + x_2$ and $Q[f] = \partial_1 f(x^*)$. Assume we have $x^* = (6, 1)$, then $Q[f_0] = 2 \cdot 6 = 12$. Any policy $\pi$ that has positive density in a neighborhood of $z=6$ will be able to fully identify $Q[f_0]$. However, any policy that puts no mass (or in practice, little mass) near $z=6$, will not identify $Q[f_0]$. This would be an uninformative policy.

\section{Algorithmic Boxes}\label{app:algorithmbox}

We present pseudocode for the `explore then exploit' (EE) strategy in \cref{alg:ee}, for `alternating explore exploit' (AEE) in \cref{alg:aee}, and for the adaptive strategy in \cref{alg:adaptive}.

\begin{algorithm}
    \caption{Explore then exploit (EE)}\label{alg:ee}
    \begin{algorithmic}[1]
        \State \textbf{Input:} Budget $T$, $\nexplore$, $\nexploit$, $\nsample$, distance measure $d$ on $\mathcal{X}$, target $x^*$
        \State \textbf{Initialize:} $\mathcal{D}^{(\le \nexplore)} \gets \emptyset$
        \State Split budget: $T = \nexplore + \nexploit$
            \For{$t = 1$ to $\nexplore$}
                \State Sample $\pi_t \sim \Gamma$
                \State Observe sample $(X_t, Z_t)$
                \State $\mathcal{D}^{(\le \nexplore)} \gets \mathcal{D}^{(\le \nexplore)} \cup \{(X_t, Z_t)\}$
        \EndFor
        \State Find $\nlookback$ nearest samples to $x^*$ in $\mathcal{D}^{(\le \nexplore)}$ based on distance $d$
        \State Fit $\pi$ as a parametric distribution on the $Z$ values of these $\nlookback$ samples
        \For{$t = \nexplore + 1$ to $T$}
            \State Set $\pi_t := \pi$
            \State Observe sample $(X_t, Z_t)$
            \State $\mathcal{D}^{(t)} \gets \{(X_t, Z_t)\}$
        \EndFor
        \State Estimate $Q^{\pm}_t$ from $\bigcup_{t = \nexplore + 1}^{T} \mathcal{D}^{(t)}$
    \end{algorithmic}
\end{algorithm}

\begin{algorithm}
\caption{Alternating explore exploit (AEE)}\label{alg:aee}
\begin{algorithmic}[1]
\State \textbf{Input:} Budget $T$, $\nlookback$, $n$, distance measure $d$ on $\mathcal{X}$, target $x^*$
\State \textbf{Initialize:} Sorted list of observed samples $\mathcal{L} \gets \emptyset$
\For{$t \in [T]$}
    \If{$t$ is odd and $t \neq T$}
        \State Sample $\pi_t \sim \Gamma$
        \State Observe sample $(X_t, Z_t)$
        \State $\mathcal{L} \gets \mathcal{L} \cup \{(X_t, Z_t)\}$
        \State Sort $\mathcal{L}$ by distance $d(X, x^*)$
    \ElsIf{$t$ is even or $t = T$}
        \State Find the $\min\{\nlookback, n \cdot t\}$ nearest samples to $x^*$ in $\mathcal{L}$
        \State Fit $\pi_t$ as a parametric distribution on the $Z$ values of these samples
        \State Observe sample $(X_t, Z_t)$
        \State $\mathcal{L} \gets \mathcal{L} \cup \{(X_t, Z_t)\}$
        \State Sort $\mathcal{L}$ by distance $d(X, x^*)$
    \EndIf
\EndFor
\State Estimate $Q^{\pm}_t$ on $\bigcup_{\text{even } t} \mathcal{D}^{(t)}$
\end{algorithmic}
\end{algorithm}

\begin{algorithm}
\caption{Adaptive}\label{alg:adaptive}
\begin{algorithmic}[1]
\State \textbf{Input:} Budget $T$, $\nexplore$, $\nexploit$, $\nsample$
        \State \textbf{Initialize:} $\Phi_t$
        \State Split budget: $T = \nexplore + \nexploit$
            \For{$t = 1$ to $\nexplore$}
                \State Sample $Z_t \sim \pi_{\Phi_t}$
                \State Observe sample $(X_t, Z_t)$
                \State Compute gradient update $\nabla(\Phi_t) \Delta(\pi_{\Phi_t})$ by splitting observed samples
        \EndFor
        \For{$t = \nexplore + 1$ to $T$}
            \State Set $\pi_t := \pi_{\Phi_{\nexplore}}$
            \State Observe sample $(X_t, Z_t)$
            \State $\mathcal{D}^{(t)} \gets \{(X_t, Z_t)\}$
        \EndFor
        \State Estimate $Q^{\pm}_t$ from $\bigcup_{t = \nexplore + 1}^{T} \mathcal{D}^{(t)}$
\end{algorithmic}
\end{algorithm}

\end{document}